\def\NoNumber#1{{\def\alglinenumber##1{}\State #1}\addtocounter{ALG@line}{-1}}
\newenvironment{customlegend}[1][]{%
	\begingroup
	\csname pgfplots@init@cleared@structures\endcsname
	\pgfplotsset{#1}%
}{%
\csname pgfplots@createlegend\endcsname
\endgroup
}%
\def\addlegendimage{\csname pgfplots@addlegendimage\endcsname}
\newcommand{\EFone}{\mathrm{EF1}}
\newcommand{\MMS}{\mathrm{MMS}}
\newcommand{\ALG}{\textsc{Alg}}
\DeclareMathOperator*{\argmax}{arg\,max}
\newcommand{\Mod}[1]{\ (\mathrm{mod}\ #1)}
\newtheorem{theorem}{Theorem}[section]
\newtheorem{lemma}{Lemma}[theorem]
\newtheorem{corollary}{Corollary}[lemma]
\newtheorem{proposition}[theorem]{Proposition}
\newcommand{\subscript}[2]{$#1 _ #2$}
  \providecommand\BibTeX{{%
    \normalfont B\kern-0.5em{\scshape i\kern-0.25em b}\kern-0.8em\TeX}}}
\begin{document}
\title{Towards Fair Recommendation in Two-Sided Platforms}

\author{Arpita Biswas}
\authornotemark[1]
\affiliation{
\institution{Harvard University}
\country{USA}
}

\author{Gourab K Patro}
\authornote{Both authors contributed equally to this work.}
\affiliation{
	\institution{Indian Institute of Technology Kharagpur}
	\country{India}
}
\affiliation{
	\institution{L3S Research Center}
	\country{Germany}
}

\author{Niloy Ganguly}
\affiliation{
\institution{Indian Institute of Technology Kharagpur}
\country{India}
}
\affiliation{
\institution{L3S Research Center}
\country{Germany}
}

\author{Krishna P. Gummadi}
\affiliation{
\institution{Max Planck Institute for Software Systems, Germany}
\country{Germany}
}

\author{Abhijnan Chakraborty}
\affiliation{
\institution{Indian Institute of Technology Delhi}
\country{India}
}

\renewcommand{\shortauthors}{}

\begin{abstract}
Many online platforms today (such as Amazon, Netflix, Spotify, LinkedIn or AirBnB) can be thought of as 
two-sided markets with producers and customers of goods and services. Traditionally, recommendation services in these platforms have focused on maximizing customer satisfaction by tailoring the results according to the personalized preferences of individual customers. However, our investigation reinforces the fact that such customer-centric design of these services may lead to unfair distribution of exposure to the producers, which may adversely impact their well-being. On the other hand, a pure producer-centric design might become unfair to the customers. As more and more people are depending on such platforms to earn a living, it is important to ensure fairness to both producers and customers. In this work, by mapping a fair personalized recommendation problem to a constrained version of the problem of fairly allocating indivisible goods, we propose to provide fairness guarantees for both sides. 
Formally, our proposed {\em FairRec} algorithm guarantees Maxi-Min Share ($\alpha$-MMS) of exposure for the producers, 
and Envy-Free up to One Item ($\EFone$) fairness for the customers. Extensive evaluations over multiple real-world datasets show the effectiveness of {\em FairRec} in ensuring two-sided fairness while incurring a marginal loss in overall recommendation quality. Finally, we present a modification of {\it FairRec} (named as {\it FairRecPlus}) that at the cost of additional computation time, improves the recommendation performance for the customers, while maintaining the same fairness guarantees.
\end{abstract}

\begin{CCSXML}
	<ccs2012>
	<concept>
	<concept_id>10002951.10003317.10003347.10003350</concept_id>
	<concept_desc>Information systems~Recommender systems</concept_desc>
	<concept_significance>500</concept_significance>
	</concept>
	</ccs2012>
\end{CCSXML}

\ccsdesc[500]{Information systems~Recommender systems}

\keywords{Fair Recommendation, Multi-stakeholder Recommendation, Two-Sided Markets, Fair Allocation, Maximin Share, Envy-Freeness}

\clubpenalty = 10000
\widowpenalty = 10000

\setlength{\belowdisplayskip}{1.1pt} 
\setlength{\belowdisplayshortskip}{1.1pt}
\setlength{\abovedisplayskip}{1.1pt}
\setlength{\abovedisplayshortskip}{1.1pt}

\maketitle

\section{introduction}\label{introduction}
Popular online platforms such as Netflix, Amazon, Yelp, Spotify, Google Local provide recommendation services to help their customers browse through the enormous product spaces.
These recommendation services often play a huge role in controling the interaction between the two stakeholders, namely (i) {\bf producers} of goods and services (e.g., movies on Netflix, products on Amazon, restaurants on Yelp, artists on Spotify) and (ii) {\bf customers} who consume them.
Maximization of customer satisfaction has been the traditional focus on these platforms which is often achieved by tailoring the recommendations according to the personalized preferences of individual customers, largely ignoring the interest of the producers.
Several recent studies have shown how such customer-centric designs may undermine the well-being of the producers \cite{abdollahpouri2019multi,burke2017multisided,edelman2017racial,graham2017digital,hannak2017bias}.
As more and more people are depending on two-sided platforms to earn a living,
recently platforms have also started showing interest in creating fair marketplaces for all the stakeholders due to multiple reasons:
{\bf (i) legal obligation} (e.g., labor bill for the welfare of drivers on Uber and Lyft~\cite{nyt2019uber}, fair marketplace laws for e-commerce~\cite{fair_marketplace}),
{\bf (ii) social responsibility or voluntary commitment} (e.g., equality of opportunity to all gender groups in LinkedIn~\cite{geyik2019fairranking}, commitment of non-discrimination to hosts and guests by AirBnb~\cite{airbnb_commitment}),
{\bf (iii) business requirement/model} (e.g., minimum business guarantee by AirBnb to attract hosts~\cite{airbnb_min_guarantee}).

In this paper, our focus is on the fairness of personalized recommendation services deployed on the two-sided platforms.
Traditionally, platforms employ various state-of-the-art data-driven methods (e.g., neighborhood-based methods~\cite{ning2015comprehensive}, latent factorization methods~\cite{liang2016factorization,koren2009matrix}, etc.) to estimate the relevance scores of every product-customer pairs, and then recommend $k$ most relevant products to the corresponding customers.
While such top-$k$ recommendations achieve high customer utility, our investigation on real-world datasets reinforces the presence of popularity bias~\cite{jannach2015recommenders} that is, they can create a huge disparity in the exposure of the producers (detailed in \cref{observations}), which is unfair for the producers, and may also hurt the platforms in the long term.

In these platforms, \textit{exposure often determines the economic opportunities (revenues) for the producers} who depend on it for their livelihood.
For instance, high exposure on {\em Google Maps} can increase the footfall in a local business, thereby increasing their revenue. 
High exposure on {\em YouTube, Spotify or Last.fm} can increase the traffic to a content producer's channel, and hence help them earn better platform-royalties or advertisement revenues.
On the other hand, if only a few producers get most of the exposure, then the other producers would struggle on the platform, which could force them to either quit or switch to other platforms~\cite{leaving_yelp,leaving_amz,leaving_uber}. 
This, in turn, may limit the choices for the customers, degrading the overall experience on the platform.
Thus, it is important to reduce exposure inequalities. 
However, extremely producer-centric ways of reducing inequality (e.g., recommending the $k$ least exposed producers to the customers) may result in loss and disparity in customer utilities (\cref{observations}), making it inefficient as well as unfair to the customers.
%

To counter such unfairness for both producers and customers, we propose to tackle the challenging task of ensuring two-sided fairness while providing personalized recommendations. 
Specifically, we propose to ensure a minimum exposure guarantee for every producer such that no producer starves for exposure.
Since the exposure guarantee on the producer side could incur losses on the customer side (i.e., reduction in customer utilities), 
we propose that the loss in utility should be fairly distributed among the customers.
Motivated by a vast literature in social choice theory, we map this problem to the problem of fairly allocating indivisible goods (\cref{mapping}). 
In an allocation problem, there is often a predefined set of items and a set of agents with their valuations (how much an agent values an item), and the task is to allocate the items among the agents. 
To map our recommendation problem to an allocation problem, we assume the set of customers as the set of agents. 
Now we can strategically fix the set of items as the one which contains as many copies of each product (or producer) as the chosen exposure guarantee, and then allocate them among the 
customers. 
%
%
If we have an algorithm that does this task, then the strategic setting of the item-set and their allocation guarantee 
can, in turn, ensure minimum exposure for the producers.
Besides, the algorithm's fairness guarantee for the agents during the allocation would also be able to guarantee customer fairness.
%
Thus, the original recommendation problem becomes an interesting (constrained) extension to the existing fair allocation problem---find an allocation that guarantees minimum exposure (upper bounded by maximin share of exposure or MMS) for the producers, and envy-free up to one item (EF1)~\cite{budish2011combinatorial} 
for the customers\footnote{The $\MMS$ guarantee ensures that each agent receives a value which is at least their \emph{maximin share} threshold, defined in \cref{eq:mms}; whereas, $\EFone$ ensures that every agent values her allocation at least as much as any other agent's allocation after (hypothetically) removing the most valuable item from the other agent's allocated bundle.}.
We propose an algorithm {\it FairRec} (\cref{algorithm}) which solves 
this problem and gives guarantees on both producer and customer side (proofs in \cref{sec:theorems}).
Extensive evaluations over multiple real-world datasets show the effectiveness of FairRec in ensuring two-sided fairness while incurring a marginal loss in recommendation quality.

In summary, we make the following contributions in this paper. 
\begin{itemize}[topsep=0pt,itemsep=-0.5ex,partopsep=1ex,parsep=1ex]
	\item We consider a two-sided fair recommendation problem that not only relates to social or judicial precepts but also to the long-term sustainability of two-sided platforms (\cref{motivation}).
	\item We design an algorithm, {\it FairRec} (\cref{algorithm}), exhibiting the desired two-sided fairness by mapping the fair recommendation problem to a fair allocation problem (\cref{mapping}). Moreover, it is agnostic to the specifics of the data-driven model (that estimates the product-customer relevance scores) which makes it scalable and easy to adapt.
	\item In addition to the theoretical guarantees (\cref{sec:theorems}), 
	extensive experimentation and evaluation over multiple real-world datasets deliver strong empirical evidence 
	on the effectiveness of our proposal~(\cref{experiments}).
	\item Finally, we also present a modified version of FairRec (named as FairRecPlus) that uses an envy-cycle elimination and swapping technique to improve the performance on customer-side metrics, while maintaining the same two-sided fairness guarantees~(\cref{sec:modified_fairrec}). 
\end{itemize}
\section{Background and Related Work}\label{related}
We briefly survey related works in two directions: (i) fairness in multi-stakeholder platforms, and (ii) fair allocation of goods.

\subsection{Fairness in Two-Sided Platforms}
With the increasing popularity of multi-sided platforms, recently researchers have looked into the issues of unfairness and biases in such platforms.
For example, \citet{edelman2017racial} investigated the possibility of racial bias in guest acceptance by Airbnb hosts, 
~\citet{lambrecht2019algorithmic} studied gender-based discrimination in career ads,~\citet{chakraborty2019equality} proposed to ensure fair representation in crowdsourced recommendations. While these works deal with {\it group fairness}, ~\citet{serbos2017fairness} proposed an {\it envy-free} tour package recommendations on travel booking sites, ensuring {\it individual fairness for customers}. 

On producer fairness, \citet{hannak2017bias} studied racial and gender bias in freelance marketplaces, and \citet{dash2021umpire} investigated favoritism towards certain producers on e-commerce marketplace.
In a social experiment, \citet{Salganik854} found that the existing popular producers often acquire most of the visibility while new but good ones starve for visibility. \citet{banerjee2020analyzing} also found popularity bias in location based recommendations. \citet{kamishima2014correcting} and \citet{abdollahpouri2017controlling} proposed methods to reduce such popularity bias among producers.
\citet{surer2018multistakeholder} proposed to maximize total customer utility while ensuring some exposure for the producers.
While these works have proposed to ensure some forms of producer-side fairness, they have not looked into the resulting unfairness on the customer-side, the trade-off between producer and customer fairness, and the cost of achieving one over the other. 

Few past works have discussed fairness for both producers and customers.
~\citet{abdollahpouri2019multi} and \citet{burke2017multisided} categorized different types of multi-stakeholder platforms and their desired group fairness properties, \citet{chakraborty2017fair} and \citet{suhr2019two} presented mechanisms for two-sided fairness in matching problems while \citet{patro2020incremental} addressed fairness issues arising due to frequent updates of platforms.
In contrast, our paper addresses individual fairness for both producers and customers, which also answers the question of the long-term sustainability of two-sided platforms.

There exists another line of work on fairness in ranking and recommendations, namely \citet{singh2018fairness,biega2018equity}, which propose to ensure the producers with expected exposures in proportion to their corresponding relevance scores in order to maintain individual fairness for producers in gig-economy platforms. 
One of the limitations of these works is that they assume the availability of true relevance scores of producers.
However, these relevance scores are often estimated and usually contain noise, which is also highlighted in \citet{raj2020comparing}.
The noise is not the only issue here; if the estimated relevance scores themselves exhibit popularity bias, then ensuring exposure in proportion to these relevance scores could cause the same inequalities in producer exposures and can do very little towards producer fairness.
Thus in this work, we try to isolate the considerations of exposure and relevance. 
Formally, we use the exposure of a producer as the measure of its utility, the relevance of recommended items as the utility of a customer (more details in \cref{notions}), and finally we define fairness for both sides using the (in)equality in their individual utilities.

\subsection{Fair Allocation of Goods}
The problem of fair allocation (popularly known as the cake-cutting problem) has been studied extensively in the area of computational social choice theory. The classical notions of fairness for this problem are envy-freeness (EF)~\cite{foley1967resource,varian1974equity} and proportional fair share (PFS)~\cite{steinhaus1948problem}. 
Recent literature on practical applications of fair allocation~\cite{brandt2016handbook,endriss2017trends} has focused on the problem of allocating \textit{indivisible} goods 
in budgeted course allocation~\cite{budish2011combinatorial}, balanced graph partition~\cite{fair-graph}, or allocation of cardinality constrained group of resources~\cite{biswas2018fair}. In such instances, no feasible allocation may satisfy EF or PFS fairness guarantees. Thus, the notable work of Budish~\cite{budish2011combinatorial} defined analogous fairness notions which are appropriate for \textit{indivisible} goods---namely, envy-freeness up to one good ($\EFone$) and maximin share guarantee ($\MMS$). 

The relevance of $\EFone$ is substantiated by the fact that it is guaranteed to always exist under general monotone valuations and, in fact, such allocations can be obtained in polynomial time~\cite{lipton-envy-graph}. When the valuations are additive, Caragiannis et al~\cite{caragiannis2016unreasonable} show that a simple greedy round robin algorithm is enough to ensure $\EFone$. 

$\MMS$ fairness is another solution concept that has been extensively studied in the fair allocation space. In particular, Bouveret et al.~\cite{bouveret2014characterizing} showed that an $\MMS$ allocation exists when the agents' valuations are additive and binary (valuations are $0$ or $1$). However, Procaccia and Wang~\cite{procaccia2014fair} and Kurokawa et al.~\cite{kurokawa2016can} provided intricate counterexamples to refute the universal existence of $\MMS$ allocations, under additive and non-binary valuations. This motivated the study of approximate maximin share allocations, $\alpha$-$\MMS$, where each agent obtains a bundle of value at least $\alpha \in (0,1)$ times her maximin share. The existence of $2/3$-$\MMS$ and accompanying algorithms were developed in a sequence of results~\cite{procaccia2014fair,amanatidis2015approximation,barman2017approximation}.  Later, Ghodsi et al.~\cite{ghodsi2017fair} improved the result by providing an efficient algorithm that obtains $3/4$-$\MMS$ allocations. 

The vast majority of work in the fair allocation space has solely focused on the unconstrained version of the problem; exceptions include the work of Biswas et al.~\cite{biswas2018fair,biswas2019matroid} and Gourv{\`{e}}s et al.~\cite{gourves2017approximate,gourves2014near}. Biswas et al.~\cite{biswas2018fair} provide algorithms for computing $\EFone$ and $1/3$-$\MMS$ for the allocation problem where items are categorized into groups, and an upper bound restricts the number of items that can be allocated to each agent from each category. This is slightly different from the problem we consider (detailed in~\cref{mapping2}). A general version of the category-wise upper bound constraint, namely laminar matroid constraint, is studied by Biswas et al.~\cite{biswas2019matroid} and the existence of $\EFone$ is proved for identical valuations. A different problem is considered by
Gourv{\`{e}}s et al.~ \cite{gourves2017approximate,gourves2014near} where the goal is to find $\MMS$ fair allocation that \emph{union} of all the allocated goods is an independent set of a given matroid. Although these papers study fair allocation under several combinatorial constraints, they do not directly apply to the problem we consider. 

Moreover, all the above mentioned papers consider fairness among the agents but not among the items. In this work, we consider fairness across the agents as well as the items. We map the problem of fair recommendation to a fair allocation problem, which leads to an interesting extension of previously studied problems owing to the specific constraints pertaining to recommendations (detailed in~\cref{mapping2}).

\vspace{1mm}
\noindent 
This paper is an extended version of our earlier work, titled ``FairRec: Two-Sided Fairness for Personalized Recommendations in Two-Sided Platforms"~\cite{patro2020fairrec}. 
In~\cite{patro2020fairrec}, we introduced the notions of two-sided fairness in recommendations and proposed the {\it FairRec} algorithm to ensure fairness for both producers and customers. In this work, we prove that FairRec ensures a stronger theoretical guarantee on the producer-side. Additionally, since we provide a tunable parameter $\alpha$ to the platforms to regulate the minimum exposure guarantee for the producers, we present a detailed analysis on what actually happens when the value of $\alpha$ is changed and its impact on both producer-side and customer-side. In certain scenarios, the platforms might be interested in ensuring different levels of exposure guarantee for different producers; for example,  the platform may want to give more exposure guarantee to higher-rated producers than the lower-rated ones. Thus, we also evaluate FairRec in such scenarios by tweaking the FairRec algorithm to entertain such requirements. We add several new baselines, such as MPB19~\cite{abdollahpouri2019managing}, MSR18~~\cite{surer2018multistakeholder}, MixedTR-$k$, MixedTP-$k$, to compare against FairRec. Finally, we propose a new modification of FairRec (named as FairRecPlus) that utilizes the envy graph to improve the recommendation performance for the customers, but comes at a cost of increased time complexity.
\section{Preliminaries}
\label{notions}
In this section, we define the terminology and notations used throughout the paper.

\subsection{Products and Producers}
In a few two-sided platforms focusing on physical establishments (e.g., Google Maps, Yelp), a producer typically owns one product (e.g., restaurant or shop); whereas in multimedia platforms like Spotify, YouTube or Netflix, an artist can produce multiple songs or videos; the same is also true for ecommerce platforms like Amazon and Flipkart, where one producer can list many products. To generalize our approach to both types of two-sided platforms, we consider products and producers to be equivalent, and use the terms `product' and `producer' interchangeably. Even for platforms where a producer can have multiple products, ensuring fairness at the product level can ensure fairness for individual producers -- where fairness can be ensured by making the exposure proportional to the producer's portfolio size.\footnote{In platforms with producers having multiple products, ensuring producer-side fairness other than proportionality remains open for future work.}

\subsection{Notations}
\label{notations}
Let $U$ and $P$ be the sets of customers and producers respectively, where $|U|=m$, and $|P|=n$.
Let $k$ be the number of products to be recommended to every customer. $R_u\subset P$ represents the set of $k$ products recommended to customer $u$; $|R_u|=k$.

\subsection{Relevance of Products}
\label{relevance}
The \textit{relevance} of a product $p$ to customer $u$, denoted as $V_u(p)$, represents the likelihood that $u$ would like the product $p$.
Formally, relevance is a function from the set of customers and products to the real numbers $V:~ U \times P \rightarrow \mathbb{R}$.
Usually, the relevance scores are predicted using various data-driven methods~(e.g., neighborhood-based methods~\cite{ning2015comprehensive}, latent factorization methods~\cite{liang2016factorization,koren2009matrix}, etc.), and $V_u(p)$ is a proxy for the utility gained by $u$ if product $p$ is recommended to her.

\subsection{Customer Utility}
\label{customer_utility}
The utility of a recommendation $R_u$ to a customer $u$ is proportional to the sum of relevance scores of products in $R_u$. 
Thus, recommending the $k$ most relevant products will give the maximum possible utility.
Let $R_u^*$ be the set of top-$k$ relevant products for $u$.
We use a normalized form of customer utility from $R_u$, defined as: $\phi^\text{}_u(R_u)=\frac{\sum_{p\in R_u} V_u(p)}{\sum_{p\in R_u^*} V_u(p)}$.

\subsection{Producer Exposure}
\label{exposure}
Exposure of a producer/product $p$ is the total amount of attention that $p$ receives from all the customers to whom $p$ has been recommended. In this paper, we assume a uniform attention model\footnote{There can be more sophisticated attention models considering \textit{position bias}~\cite{agarwal2019estimating}, where customers pay more attention to the top ranked products than the lower ranked ones. This being an initial work on two-sided fair recommendation (formulated as a fair-allocation problem), we focused on a basic model setting without position bias.} where customers pay similar attention to all $k$ recommended products, and express the exposure of a product $p$ 
as $E_p = \sum_{u\in U}\mathbbm{1}_{R_u}(p)$, where $\mathbbm{1}_{R_u}(p)$ is $1$ if $p\in R_u$, and $0$ otherwise. The sum of exposures of all the products is $\sum_{p\in P}E_p=m\times k$.
Note that we assume the relevance of a product does not play any role in producer's utility (in contrast to \citet{singh2018fairness,biega2018equity}), and use only the exposure of a producer as her utility.
\section{Need for two-sided fairness in personalized recommendations}
\label{motivation}
Traditionally, the goal of personalized recommendation has been to recommend products that would be most relevant to a customer. This task typically requires learning the relevance scoring functions ($V$), and several state-of-the-art data-driven methods~\cite{ning2015comprehensive,liang2016factorization,zhou2011functional,kim2016convolutional,xue2017deep} have been developed to estimate the product-customer relevance values. 
Once these values are obtained, the standard practice, across several recommender systems, is to recommend the top-$k$ ($k$=size of recommendation) relevant products to corresponding customers.
While this approach attempts to 
maximize the satisfaction of individual customers, it can adversely affect the producers in a two-sided platform, as we explore next.

\begin{figure}[t]
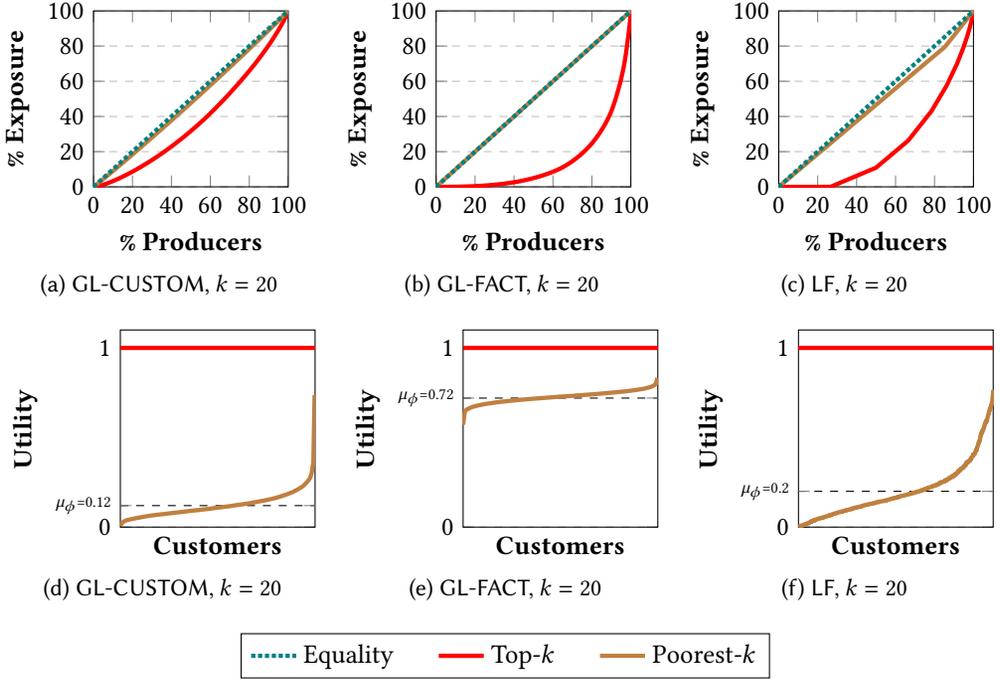

\center
{
		\subfloat[{GL-CUSTOM, $k=20$}]{\pgfplotsset{height=0.28\textwidth,width=0.3\textwidth,compat=1.9}\input{figures/lorentz_gl1_k_20}\label{fig:lorentz_gl1_k_20}}
		\hfil
		\subfloat[{GL-FACT, $k=20$}]{\pgfplotsset{height=0.28\textwidth,width=0.3\textwidth,compat=1.9}\input{figures/lorentz_gl2_k_20}\label{fig:lorentz_gl2_k_20}}
		\hfil
		\subfloat[{LF, $k=20$}]{\pgfplotsset{height=0.28\textwidth,width=0.3\textwidth,compat=1.9}\input{figures/lorentz_lf_k_20}\label{fig:lorentz_lf_k_20}}
		\vfil
		\subfloat[{GL-CUSTOM, $k=20$}]{\pgfplotsset{height=0.3\textwidth,width=0.3\textwidth,compat=1.9}\input{figures/cdf_gl1_k_20}\label{fig:cdf_gl1_k_20}}
		\hfil
		\subfloat[{GL-FACT, $k=20$}]{\pgfplotsset{height=0.3\textwidth,width=0.3\textwidth,compat=1.9}\input{figures/cdf_gl2_k_20}\label{fig:cdf_gl2_k_20}}
		\hfil
		\subfloat[{LF, $k=20$}]{\pgfplotsset{height=0.3\textwidth,width=0.3\textwidth,compat=1.9}\input{figures/cdf_lf_k_20}\label{fig:cdf_lf_k_20}}
		\vfil
		\subfloat{\pgfplotsset{width=.5\textwidth,compat=1.9}
			\begin{tikzpicture}
			\begin{customlegend}[legend entries={{Equality},{Top-$k$},{Poorest-$k$}},legend columns=3,legend style={/tikz/every even column/.append style={column sep=0.5cm}}]
			\addlegendimage{teal,ultra thick, densely dotted,sharp plot}
			\addlegendimage{red,mark=.,ultra thick,sharp plot}
			\addlegendimage{brown,mark=.,ultra thick,sharp plot}
			\end{customlegend}
			\end{tikzpicture}}
}
\caption{Lorenz curves (figures \ref{fig:lorentz_gl1_k_20},\ref{fig:lorentz_gl2_k_20},\ref{fig:lorentz_lf_k_20}) show high inequality among producer exposures with the top-k recommendation (using the ratio of rating to distance to estimate relevance in GL-CUSTOM, and using latent factorization model \cite{liang2016factorization} in GL-GACT and LF). 
In these curves, the cumulative fraction of total exposure is plotted against the cumulative fraction of the number of corresponding producers (ranked in increasing order of their exposures). The extent to which the curve goes below a straight diagonal line (or an equality mark) indicates the degree of inequality in the exposure distribution.
In figures \ref{fig:cdf_gl1_k_20}, \ref{fig:cdf_gl2_k_20}, and \ref{fig:cdf_lf_k_20}, we plot the individual customer utilities sorted in ascending order while $\mu_\phi$ represents the mean customer utility. While poorest-k provides almost equal exposures, it introduces huge loss and disparity in individual customer utilities (figures \ref{fig:cdf_gl1_k_20},\ref{fig:cdf_gl2_k_20},\ref{fig:cdf_lf_k_20}).}
\label{fig:lorenz_n_cdf}	
\end{figure}
\subsection{Datasets}
\label{dataset}
We consider the impact of customer-centric top-$k$ recommendations on the exposure of the producers using real-world datasets. We use a state-of-the-art relevance scoring model (a very widely used latent factorization method~\cite{liang2016factorization}) and also a dataset-specific custom relevance scoring model over the datasets. \\

\subsubsection{Google Local Ratings Dataset (GL)}~\\
Google Local is a service to find nearby 
shops, restaurants on Google Maps (as Google Nearby feature) platform.
We use the Google Local dataset released by \citet{he2017translation}, which contains data about customers, local businesses (producers), and their locations (geographic coordinates), ratings, etc.
We consider the active customers located in New York City and the business entities within $5$ miles radius of Manhattan area with at least $10$ reviews. The resulting dataset contains $11,172$ customers, $855$ businesses and $25,686$ reviews.
We consider the following two relevance scoring functions ($V$).\\

\begin{enumerate}
\item \textbf{\textit{GL-CUSTOM:}}
We use a custom relevance scoring function: $V(u,p)=\frac{rating(p)}{distance(u,p)}$, where $rating(p)$ is the average rating of the producer (local business) and $distance(u,p)$ is the distance between customer $u$ and producer $p$. 

\vspace{0.5mm}
\item \textbf{\textit{GL-FACT:}}
Here we use the state-of-the-art latent factorization model~\cite{liang2016factorization,koren2009matrix} to predict the relevance scores from the ratings. 
\end{enumerate}
From here on, we refer to the above two datasets (relevance score sets) as GL-CUSTOM and GL-FACT respectively.

\vspace{1mm}
\subsubsection{Last.fm Dataset (LF)}~\\
We use the {\tt Last.fm} dataset released by \citet{Cantador:RecSys2011}, which contains $1,892$ customers, $17,632$ artists (producers), and $92,834$ records of play counts (the number of times a customer has played songs from an artist).
We again use a latent factorization model~\cite{liang2016factorization,koren2009matrix} to find out the relevance scores from the play counts. 
From here on, we refer to this dataset as LF.

\subsection{Adverse Impact of Customer-Centric Recommendation}
\label{observations}
We simulate top-$k$ ($k=20$) recommendation on all three datasets, 
and calculate the exposure different producers get.
Figures-\ref{fig:lorentz_gl1_k_20},\ref{fig:lorentz_gl2_k_20},\ref{fig:lorentz_lf_k_20} are the Lorenz curves for producer exposures.
In Exposure Lorenz curves, the cumulative fraction of total exposure is plotted against the cumulative fraction of the number of corresponding producers (ranked in increasing order of their exposures). The extent to which the curve goes below a straight diagonal line (or an equality mark) indicates the degree of inequality in the exposure distribution.
We observe that the Lorenz curves for top-$k$ recommendations are far below the equal exposure marks, revealing that for conventional top-$k$ recommendation, $50$\% least exposed producers get 
only $32$\%, $5$\%, and $11$\% of total available exposure ($m\cdot k$) in {\em GL-CUSTOM} (using the ratio of rating to distance to estimate relevance), {\em GL-FACT} (using latent factorization model \cite{liang2016factorization}), and {\em LF} datasets (using latent factorization model \cite{liang2016factorization}), respectively. \\

\noindent {\bf Huge disparity is in nobody's interest: } In two-sided platforms, the exposure determines the economic opportunities.
Thus, low exposure on a platform often puts many producers at huge losses, forcing them to leave the platform;
this may result in fewer available choices for the customers, thereby degrading the overall quality of the platform.
Thus, highly skewed exposure distribution of the customer-centric top-$k$ recommendation not only makes it unfair to the producers but also questions the long-term sustainability of the platforms. 
Thus, there is a need to be fair to the producers while designing recommender systems.

\subsection{Pitfalls of Producer-Centric Solution}
A naive approach to reduce inequality in producer exposures is to implement a producer-centric recommendation (poorest-$k$):
recommend the least-$k$ exposed products to the customer at any instant. Such producer-centric scheme makes the exposure of all the producers nearly equal, as seen in figures-\ref{fig:lorentz_gl1_k_20},~\ref{fig:lorentz_gl2_k_20} and ~\ref{fig:lorentz_lf_k_20}: 
Lorenz curves for poorest-$k$ recommendations are closer to the diagonal than those of top-$k$ recommendations. 
However, such poorest-$k$ recommendation decreases overall customer utilities (as seen in figures-\ref{fig:cdf_gl1_k_20},~\ref{fig:cdf_gl2_k_20} and ~\ref{fig:cdf_lf_k_20}).
Moreover, the poorest-$k$ introduces disparity among individual customer utilities, where some customers may suffer much higher losses than other customers, making it unfair to them. 

\subsection{Desiderata of Fair Recommendation}
\label{fairness_properties}
To counter the above-mentioned issues, in this work, we 
propose the following fairness properties to be satisfied by the recommendation to be fair to both producers and customers. \\

\noindent \textbf{A. Producer Fairness: }
Mandating a uniform exposure distribution over the producers can be too harsh on the system;
it may heavily hamper the quality of the recommendation, and might also kill the existing competition by discouraging the producers from improving the quality of their products or services.
Instead, we propose to {\bf ensure a minimum exposure guarantee for every producer} such that no producer starves for exposure. The proposal is comparable to the fairness of {\it minimum wage guarantee} (e.g., as required by multiple legislations in the US, starting from Fair Labor Standards Act 1938 to Fair Minimum Wage Act 2007~\cite{pollin2008measure,green2010minimum,falk2006fairness}).
Ensuring minimum wage does not itself guarantee equality of income;
however it has been found to decrease income inequality~\cite{lin2016effects,engbom2018earnings}. Similarly, we want to ensure minimum exposure to every producer in the system. Note that we are not mandating a fixed exposure guarantee -- the exact value of the guaranteed minimum exposure 
can be decided by the respective platforms. \\

\noindent \textbf{B. Customer Fairness: }
As maintaining producer fairness can cause an overall loss in customer utility, we propose that {\bf the loss in utility  should be fairly distributed among the customers}. To ensure this, products need to be recommended in a way such that no customer can gain extra utility by exchanging her set of recommended products with another customer -- a property called {\it envy-freeness}, as detailed in the next section.
\section{Re-imagining Fair Recommendation as Fair Allocation}
\label{mapping}
Given a set of items (say, $\mathcal{P}$), a set of agents (say, $\mathcal{U}$), and valuations $\mathcal{V}$ (how much an agent values an item),
the \textbf{\textit{fair allocation problem}} aims at distributing the items \textit{fairly} among the agents. In the discrete version of this problem, the items are \textit{discrete} (no item can be broken into pieces) and \textit{non-shareable} (no item can be allocated to multiple agents).
If $\mathcal{P}$ contains several copies of the same item, 
each copy can be thought of as non-shareable and discrete.
The goal is to find 
a non-shareable and discrete allocation ($\mathcal{A}:=\{(A_u)_{u\in \mathcal{U}}: A_u\subseteq \mathcal{P}\}$) while ensuring 
fairness properties.

\subsection{Notions of Fairness in Allocation}
The classical fairness notions, such as envy-freeness\footnote{An allocation is said to satisfy \textit{envy-freeness} if the bundle of items allocated to each agent is as valuable to her as the bundle allocated to any other agent~\cite{foley1967resource,varian1974equity,stromquist1980cut}.} (EF) and proportional-fair-share\footnote{An allocation is said to satisfy \textit{proportional-fair-share} if each agent receives a bundle of value at least $1/|\mathcal{U}|^{th}$ of her total value for all the items~\cite{steinhaus1948problem}.} (PFS), may not be achievable in most instances of the problem.
For example, if there are two agents and one item, the item will be allocated to one of the agents, and the zero allocation to the other agent would violate both EF and PFS.
Thus, for \textit{discrete} items, relaxations of EF and PFS have been considered.
Two such well-studied notions of fairness in the discrete fair allocation literature are (i)~\emph{envy freeness up to one item} ($\EFone$) and (ii)~the \textit{maximin share guarantee} ($\MMS$), defined by~\citet{budish2011combinatorial}. Since then, these have been extensively studied in various settings for providing existential and algorithmic guarantees~\cite{amanatidis2018comparing,barman2018groupwise,brandt2016handbook,amanatidis2015approximation,procaccia2014fair,kurokawa2016can,bouveret2014characterizing,caragiannis2016unreasonable,biswas2018fair,biswas2019matroid,fair-graph,bilo2018almost}.
We now formally state these fairness notions:
\begin{itemize}[leftmargin=*]
	\item An allocation $\mathcal{A}$ is $\EFone$ iff for every pair of agents $u , w \in \mathcal{U}$ there exists an item $p \in A_w$ such that $\mathcal{V}_u(A_u) \geq \mathcal{V}_u(A_w\setminus\{p\})$.
	
	\item An allocation is said to satisfy $\MMS$ if each agent receives a value greater than or equal to their \emph{maximin share} threshold. This threshold for an agent $u$ is defined as
	\begin{equation}
		\MMS_u= \max_{\mathcal{A}}\  \min_{w\in \mathcal{U}}\ \mathcal{V}_u(A_w).\label{eq:mms}
	\end{equation}
	In other words, $\MMS_u$ is the maximum value that the agent can guarantee for herself if she were to allocate $\mathcal{P}$ into $|\mathcal{U}|$ bundles and then, from those bundles, receive the minimum valued one.
	Formally, an allocation $\mathcal{A}$ satisfy $\MMS$ fairness iff for all agents $u \in \mathcal{U}$, we have $\mathcal{V}_u(A_u) \geq \MMS_u$.
\end{itemize}

\subsection{Fair Recommendation to Fair Allocation}
We propose to see the desired two-sided recommendation problem as a {\em fair allocation} problem.
The set of products $P$ can be thought of as the set of items $\mathcal{P}$ (there can be multiple copies of individual products)\footnote{Note that $P$ represents the set of producers or products. On the other hand, $\mathcal{P}$ represents the set of items that are to be allocated among the customers; this set of items can be suitably formed by creating and gathering copies of each product based on how much exposure we want to guarantee for that product.};
similarly, the set of customers $U$ as the set of agents $\mathcal{U}$, and the relevance scoring function $V$ as the valuations $\mathcal{V}$. Now the task of recommending products to customers is the same as allocating items in $\mathcal{P}$ to agents in $\mathcal{U}$ with certain constraints.

\begin{itemize} [leftmargin=*]
\item {\bf Setting $\mathcal{P}$ for Producer Fairness:} As the total exposure of the platform 
is limited ($k\cdot |U|$), the maximum guarantee on minimum possible exposure for the producers is $\left\lfloor\frac{k\cdot |U|}{|P|}\right\rfloor$ (this refers to the $\mathrm{MMS}$ value for the producers). 
One way to formally define a lower threshold requirement $\overline{E}$ is by using the notion of \textit{approximate maximin share} ($\alpha$-$\MMS$). More formally,
 
\begin{definition}
An allocation $\mathcal{A}=(A_1,\ldots, A_n)$ is said to satisfy $\alpha$-$\MMS$ for a fixed value $\alpha\in(0,1]$ if and only if, for all agents $i \in\{1,\ldots, n\}$, the following holds:

\[\mathcal{V}_i(A_i)\geq \alpha\MMS_i, \quad\quad\mbox{ where }\quad \MMS_i = \underset{\mathcal{A}}{\max}\ \ \ \underset{j\in\{1,\ldots,n\}}{\mathrm{argmin}}\ \ \  \mathcal{V}_i(A_j).\]
\end{definition}
A platform can decide the exact value of $\alpha$ and provide the minimum exposure guarantee $\overline{E} = \alpha \mathrm{MMS}$ for every producer. \\

\item {\bf Fair Allocation of $\mathcal{P}$ among $\mathcal{U}$:}
Once $\mathcal{P}$ is set according to the desired producer fairness, the entire task of fair recommendation boils down to the allocation of $\mathcal{P}$ among $\mathcal{U}$ while ensuring $\EFone$ for agents/customers (to ensure fairness for customers, as introduced in \cref{fairness_properties}). However, specific constraints related to the recommendation problem need a novel extension of the traditional fair allocation problem, as explained next.
\end{itemize}

\subsection{Extending the Conventional Fair Allocation Problem}\label{mapping2}
Traditionally, \textit{fair allocation} literature aims at defining and ensuring \textit{fairness} among the agents while allocating all items 
the set $\mathcal{P}$ exhaustively. However, in the {\it fair recommendation} problem, along with customer fairness, the challenge is to attain producer (or product) fairness by providing a minimum exposure guarantee (where each product needs to be allocated to at least $\ell$ different customers). Thus, achieving producer fairness is the same as creating at least $\ell$ copies of each product and ensuring that all the copies are allocated, along with a feasibility constraint which enforces that no customer gets more than one copy of the same product. This extension of the problem---where all the items are grouped into disjoint categories and no agent receives more than a pre-specified number of items from the same category---is called cardinality constrained fair allocation problem, proposed 
in~\cite{biswas2018fair}. In this paper, we consider a novel extension of the cardinality constrained problem by adding another constraint enforcing that exactly $k$ items are allocated to each customer. This requires tackling hierarchical feasibility constraints---an upper bound cardinality constraint of one on each product and a cardinality constraint of $k$ on the total number of allocated products. Moreover, this additional feasibility constraint makes it difficult to decide how many copies of which product should be made available for a total of ($k\cdot |U|$) allocations, satisfying the feasibility constraints as well as the fairness requirement. 
Thus, unlike the fair allocation problem, we consider no restriction on the number of copies of each product that are made available. All these contrast points, along with the two-sided fairness guarantees make fair recommendation an interesting extension of the fair allocation problem.

Overall, we aim to recommend a set of $k$ products, denoted as $A_u\subset P$, to each customer $u\in U$ that satisfies the following constraints:

\begin{eqnarray}
\displaystyle \sum_{p\in A_u} V_u(p)&\geq & \displaystyle \sum_{p\in A_w} V_u(p) - \underset{p\in A_w}{\max}\ V_u(p) \mbox{ for every pair of customers }u,w\in\mathcal{U}.\\
|A_u| &=& k, \mbox{ for all customers } u \in U.\\
\displaystyle \sum_{p\in P} E_p &\geq & \left\lfloor\frac{\alpha m k}{n}\right\rfloor, \mbox{ for each producer }p\in P, \mbox{ for some }\alpha \in (0,1].
\end{eqnarray}

This formulation leads to a large number of constraints, of the order $|U|^2+|P|$. Standard Integer Linear Programming techniques for finding a feasible solution, in this large constraint space, do not scale well, especially considering the real-world two-sided platforms. Hence, in the next section, we propose a greedy algorithm to solve the above-mentioned constraint satisfaction problem.
\section{FairRec: An Algorithm to ensure two-sided fairness}
\label{algorithm}
In this section, we provide a polynomial-time algorithm {\em FairRec}, for finding an allocation $\mathcal{A}$ which satisfies the desired two-sided fairness described in \cref{fairness_properties} (we prove the theoretical guarantees in ~\cref{sec:theorems}). Note that we consider only the case of $k<|P|$, and leave the trivial case of $k=|P|$ and the infeasible case of $k>|P|$ out of consideration.
Also, we consider $|P|\leq k\cdot |U|$, otherwise, at least ($|P|-k\cdot |U|$) producers can not be allocated to any customer.

\begin{algorithm}[t]
		{\raggedright
			{
				{\raggedright{\bf Input:} Set of customers $U=[m]$, set of distinct products $P=[n]$,   recommendation set size $k$ (such that $k<n$ and $n\leq k\cdot m$), and the relevance scores $V_u(p)$.}\\
				{\raggedright{\bf Output:} A two-sided fair recommendation.}
			}
			\caption{{\em FairRec} ($U, P, k, V$)}
			\label{alg:two-sided}
			\begin{algorithmic}[1]
				
				\State Initialize allocation $\mathcal{A}^0=(A^0_1, \ldots, A^0_m)$ with $A^0_i~\leftarrow~\emptyset$ for each customer $i \in [m]$.
				
				\NoNumber{}
				\NoNumber{\textbf{First Phase:}}
				\State Fix an (arbitrary) ordering of the customers $\sigma = \left(\sigma(1), \sigma(2),\ldots, \sigma(m)\right)$.
				\State Initialize set of feasible products $F_u \leftarrow P$ for each $u\in [m]$.
				\State Set $\ell \leftarrow \left\lfloor \frac{\alpha m k}{n}\right\rfloor$ denoting number of copies of each product.	
				\State Initialize each component of the vector $S=(S_1, \ldots, S_n)$ with $S_j\leftarrow \ell$, $\forall j\in [n]$, this stores the number of available copies of each product.
				\State Set $T\leftarrow \ell\times n$, total number of items to be allocated.
				\State $[\mathcal{B}, F, x]\leftarrow$Greedy-Round-Robin$(m,n,S,T,V,\sigma,F)$.
				\State Assign $\mathcal{A}\leftarrow \mathcal{A}\cup \mathcal{B}$.
				\NoNumber{}
				\NoNumber{\textbf{Second Phase:}}
				\State Set $\Lambda=|A_{\sigma((x)\Mod{m}+1)}|$ denoting the number of items allocated to the customer subsequent to $x$, according to the ordering $\sigma$. 
				\If{ $\Lambda<k$}
				\State Update each component of the vector $S=(S_1,\ldots,S_n)$ with the value $m$ in order to allow allocating any product to any customer.
				\State Set $T\leftarrow 0$.
				\If{ $x<m$ }
				\State Set $\sigma'(i)\leftarrow \sigma((i+x-1)\Mod{m}+1)$ for all $i\in[m]$.
				\State $\sigma\leftarrow \sigma'$.
				\State $T\leftarrow (m-x)$.
				\State Update $\Lambda\leftarrow \Lambda+1$.
				\EndIf				
				\State $T\leftarrow T+m(k-\Lambda)$ total number of items to be allocated.
				\State $[\mathcal{C},F,x]\leftarrow$Greedy-Round-Robin$(m,n,S,T,V,\sigma,F)$.
				\State Assign $\mathcal{A}\leftarrow \mathcal{A}\cup \mathcal{C}$.
				\EndIf
				
				\State Return $\mathcal{A}$.
			\end{algorithmic}}
\end{algorithm}

\textit{FairRec} (Algorithm~\ref{alg:two-sided}) executes in two phases.
The first phase ensures $\EFone$ among all the $m$ customers (Lemma \ref{lemma:customer_fairness}) and tries to provide a minimum guarantee on the exposure of the producers (Lemma~\ref{lemma:producer_fairness}).
However, the first phase may not allocate exactly $k$ products to all the $m$ users, which is then ensured by the second phase while simultaneously maintaining $\EFone$ for customers. 
	
	The {\bf first phase} creates $\ell=\left\lfloor\frac{\alpha mk}{n}\right\rfloor$ copies of each product. Note that $\left\lfloor\frac{mk}{n}\right\rfloor$ is the maximin value of any producer 
	 when $mk$ slots are allocated among $n$ producers, and thus $\ell$ represents the $\alpha$-MMS value for each producer. 
	The algorithm then initializes each component of the vector $S$ of size $|P|$ to $\ell$ to ensure that at most $\ell$ copies from each product are allocated in the first phase.
	Feasible sets $F_u$ for each customer $u$ are then initialized to ensure that each customer receives at most one copy of the same product.
	Then, assuming an arbitrary ordering $\sigma$ of customers, Algorithm~$\ref{alg:greedy}$ is executed and the allocation $\mathcal{B}$ is obtained.

	The {\bf second phase} checks if all the customers have received exactly $k$ products (by looking at the number of products allocated to the customer $x+1$ which is next-in-sequence to the last allocated customer $x$ of the first phase). If the customer $x+1$ has received $k$ products, then no further allocation is required;
	if not, then Algorithm~\ref{alg:greedy} is called again with a new ordering obtained by $x$ left-cyclic rotations of $\sigma$.
	The remaining number of items is stored in $T$ which are to be allocated among the customers.
	Also, each component of the vector $S$ is updated to $|U|$ to allow allocating any feasible product without any limit on the available number of copies.
	The second phase retains $\EFone$ fairness among the customers. 
	
	Both phases use a modified version of the Greedy-round-robin (Algorithm~\ref{alg:greedy})~\cite{caragiannis2016unreasonable,biswas2018fair}:
	it follows the ordering $\sigma$ in a round-robin fashion (i.e., it selects customers, one after the other, from $\sigma(1)$ to $\sigma(m)$), and iteratively assigns to the selected customer her most desired unallocated product (feasibility maintained by the vector $S$ and sets $F_u$ and ties are broken arbitrarily).
	This process is repeated over several rounds until one of the two disjoint conditions occur: (i)~$T==0$: a total of $T$ allocations have occurred, or (ii)~$p==\emptyset$: no feasible product available (for the current customer $\sigma(i)$, we have $F_{\sigma(i)} \cap \{p: S_p\neq 0\}= \emptyset$). Finally, it returns an allocation $B_1,\ldots,B_m$ with each $B_u\subseteq[n]$ for all $u\in[m]$.
	
Note that while we choose minimum exposure guarantee and EF1 as fairness constraints for producers and customers respectively, it does not necessarily mean that merely satisfying them is enough.
Apart from ensuring fairness, we also need to perform well in what any recommender system is originally designed to do, i.e., to provide good (relevant) recommendations to the customers.
Thus, in Algorithm~\ref{alg:greedy} which allocates products to customers in greedy-round-robin manner, we try to allocate (in step-7) the best product which is available and feasible in every round.
We believe this is the reason why FairRec shows good performance in overall customer utility as well (based on experimental results in \cref{experiments}).
\begin{algorithm}[t]
			{\raggedright
				{
					{\raggedright {\bf Input :} Number of customers $m$, number of producers $n$, an array with number of available copies of each product $S$, total number of available products $T>0$, relevance scores $V_u(p)$ and feasible product set $F_u$ for each customer, and an ordering $\sigma$ of $[m]$.}\\
					{\raggedright {\bf Output:} An allocation of $T$ products among $m$ customers, the residual feasible set $F_u$ and the last allocated index $x$.
					}
					\caption{Greedy-Round-Robin ($m,n,S,T,V,\sigma,F$)}
					\label{alg:greedy}
					\begin{algorithmic}[1]
						\State Initialize allocation $\mathcal{B}=(B_1, \ldots, B_m)$ with $B_i~\leftarrow~\emptyset$ for each customer $i \in [m]$.	
						\State Initiate $x \leftarrow m$.
						\State Initiate round $r\leftarrow 0$.
						\While{ true }
						\State Set $r\leftarrow r+1.$
						\For{$i =1 \mbox{ to } m$}
						\State Set $p \in \underset{p'\in F_{\sigma(i)}:(S_p\neq 0)}{\argmax} V_{\sigma(i)}(p')$ 
						\If{$p == \emptyset$}
						\State Set $x=i-1$ only if $i\neq 1$.
						\State \textbf{go to} Step $22$.
						\EndIf
						\State Update $B_{\sigma(i)} \leftarrow B_{\sigma(i)} \cup p$. 
						\State Update $F_{\sigma(i)} \leftarrow F_{\sigma(i)} \setminus p$.
						\State Update $S_p\leftarrow S_p-1$.
						\State Update $T\leftarrow T-1$.					
						\If{ $T==0$ }
						\State $x=i$.
						\State \textbf{go to} Step $22$.
						\EndIf
						\EndFor
						\EndWhile			
						\State Return $\mathcal{B}=(B_1,\ldots,B_m)$, $F=(F_1,\ldots,F_m)$ and index $x$.
					\end{algorithmic}}}
				\end{algorithm}			
\section{Theoretical Guarantees}\label{sec:theorems}
In this section, we provide a few important properties of Algorithm~\ref{alg:greedy} in Proposition~\ref{proposition:greedy}. Later, we establish the fairness guarantees and time complexity of our proposed algorithm \textit{FairRec} in Theorem~\ref{theorem:two-sided} using Lemma~\ref{lemma:customer_fairness}, \ref{lemma:alpha-producers} and \ref{lemma:polytime}.

\begin{proposition}\label{proposition:greedy}
	The allocation obtained by the $\mathrm{Greedy}$-$\mathrm{Round}$-$\mathrm{Robin}$ (Algorithm~\ref{alg:greedy}) exhibits the following four properties:
	\begin{enumerate}[label=(\subscript{P}{{\arabic*}})]
		\item for any two indices $x$ and $y$, where $x < y$, the customer $\sigma(x)$ (who appears earlier than $\sigma(y)$ according to the ordering $\sigma$) does not envy customer $\sigma(y)$, i.e., $V_{\sigma(x)} (B_{\sigma(x)} ) \geq V_{\sigma(x)} (B_{\sigma(y)})$.
		\item the allocation $\mathcal{B}$ obtained by Algorithm~\ref{alg:greedy} is $\EFone$.
		\item each customer is allocated at most one item from the same producer, thus ensuring the cardinality constraint is satisfied for each producer (category).
		\item for any two customers, say $u$ and $w$, the allocation $\mathcal{B}$ obtained by Algorithm~\ref{alg:greedy} satisfies the following: $-1\leq \left(|B_u|-|B_w|\right) \leq 1$.    
	\end{enumerate}
\end{proposition}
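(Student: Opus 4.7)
The plan is to prove the four properties in the order (P3), (P4), (P1), (P2), building upward from bookkeeping facts about Algorithm~\ref{alg:greedy} to the harder envy-related claims. Throughout, fix two customers $u=\sigma(x)$ and $w=\sigma(y)$ with $x<y$, and denote their round-$r$ picks by $a_r$ and $b_r$ respectively.

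(P3) and (P4) are almost immediate from the code. For (P3), Step~13 of Algorithm~\ref{alg:greedy} removes $p$ from $F_{\sigma(i)}$ the moment $\sigma(i)$ receives $p$ in Step~12, and Step~7 restricts every subsequent pick of $\sigma(i)$ to $F_{\sigma(i)}$; hence no customer can receive two copies of the same product. For (P4), the outer loop iterates customers in the strict cyclic order $\sigma(1),\dots,\sigma(m)$ and can only terminate via Step~10 (infeasibility) or Step~18 ($T=0$), both of which fire at some index mid-round. Customers who have already picked in that final round then hold exactly one more item than those who have not, yielding $|B_u|-|B_w|\in\{-1,0,1\}$ for every pair.

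For (P1), I would first establish the monotonicity $V_u(a_1)\geq V_u(a_2)\geq\cdots$: since $u$'s feasible-available set in round $r$ is contained in that of round $r-1$ minus $a_{r-1}$, her new best cannot exceed her previous best. Next, classify each round $r$ in which $w$ picks as \emph{fresh} if $b_r\notin\{a_1,\dots,a_{r-1}\}$ and \emph{stale} otherwise. In a fresh round, $b_r$ was a legitimate candidate for $u$ at her earlier round-$r$ turn---feasible by freshness and available since $S_{b_r}>0$ at $w$'s later turn implies the same at $u$'s turn---so $V_u(a_r)\geq V_u(b_r)$; in a stale round, $b_r=a_{r'}$ for a unique $r'<r$, so $b_r\in B_u\cap B_w$ and its value cancels between $V_u(B_u)$ and $V_u(B_w)$. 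Reducing the inequality to $V_u(B_u\setminus B_w)\geq V_u(B_w\setminus B_u)$, I would construct a value-preserving injection from $B_w\setminus B_u$ into $B_u\setminus B_w$: for each $y\in B_w\setminus B_u$ the round $r_y$ must be fresh (otherwise $y$ would lie in $B_u$), yielding a witness $a_{r_y}\in B_u$ with $V_u(a_{r_y})\geq V_u(y)$; if $a_{r_y}$ itself lies in $B_u\cap B_w$ we have $a_{r_y}=b_{r^*}$ and follow a chain $a_{r_y}\mapsto a_{r^*}\mapsto\cdots$ until we escape $B_w$, using the monotonicity of $V_u(a_\cdot)$ and the persistent feasibility-availability of $y$ as a candidate for $u$ throughout the chain to argue that the terminal image still dominates $V_u(y)$.

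For (P2), EF1 drops out of (P1). For pairs $(u,w)$ in the order direction of $\sigma$, (P1) already gives strict envy-freeness, so EF1 is trivial. For the reverse direction, where $w$ precedes $u$, I would use a shift-by-one argument: remove $w$'s first pick $b_1$ from $B_w$, pair each remaining $b_{r+1}$ with $a_r$, and reapply the fresh/stale dichotomy to deduce $V_u(B_u)\geq V_u(B_w\setminus\{b_1\})$. The main technical obstacle, which I expect to consume most of the proof effort, is the chain-traversal step in (P1): the natural witness $a_{r_y}$ may lie in $B_u\cap B_w$ rather than in $B_u\setminus B_w$, and the chain of substitutions can in principle traverse many rounds over which $V_u(a_\cdot)$ decays; cleanly showing that the terminal witness still dominates $V_u(y)$ requires the invariant that $y$ remains a valid option for $u$ across every round visited, which in turn hinges on $u$ never picking $y$ and on enough copies of $y$ surviving the intermediate picks by other customers.
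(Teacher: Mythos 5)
Your handling of (P3) and (P4) is correct and coincides with the paper's, and your shift-by-one argument for the direction of (P2) in which the later customer may envy the earlier one is exactly the paper's round-$r$ versus round-$(r+1)$ comparison. The genuine gap is in (P1), and it propagates into the other half of (P2), which you obtain as a corollary of (P1). Your fresh/stale split correctly isolates the situation that a simple per-round comparison does not cover---$w$ picking a copy of a product that $u$ already holds---but the chain-traversal step you defer cannot be closed with the tools you name. When the witness $a_{r_y}$ lies in $B_u\cap B_w$ because $w$ takes that product at a \emph{later} round $r^*>r_y$, monotonicity of $V_u(a_{\cdot})$ points the wrong way (it gives $V_u(a_{r^*})\le V_u(a_{r_y})$), and ``persistent feasibility-availability of $y$'' is not an invariant you may assume: availability of $y$ at $u$'s turns is guaranteed only up to round $r_y$, after which all $\ell$ copies may be gone. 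You also never argue that the chained assignment is injective into $B_u\setminus B_w$; distinct chains can merge, and without injectivity the sum comparison does not follow.

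This is not merely a presentational gap: in the multi-copy setting the dominating injection you are after need not exist. Take $m=3$, $k=2$, $n=3$, $\alpha=1$ (so $\ell=2$), products $p,q,s$ with $V_u=(10,9,1)$, let the second customer $v$ rank $q\succ p\succ s$ and the third customer $w$ rank $q$ first and $p$ last. In round $1$, $u$ takes $p$ while $v$ and $w$ take the two copies of $q$; in round $2$ the only product feasible and available for $u$ is $s$, while $v$ takes the last copy of $p$. Then $B_u=\{p,s\}$, $B_v=\{q,p\}$, so $V_u(B_u)=11<19=V_u(B_v)$, and there is no value-dominating injection from $B_v\setminus B_u=\{q\}$ into $B_u\setminus B_v=\{s\}$ (the allocation is still $\EFone$, since dropping $p$ from $B_v$ removes the envy). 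So the reduction of (P1) to $V_u(B_u\setminus B_w)\ge V_u(B_w\setminus B_u)$ cannot be rescued by any completion of the chain you sketch, and the earlier-before-later half of $\EFone$ cannot be obtained as a corollary of exact envy-freeness in this generality---it would need its own argument using the one-removed-item slack. For contrast, the paper's proof is the short per-round route ($\sigma(x)$'s round-$r$ pick dominates $\sigma(y)$'s round-$r$ pick in $V_{\sigma(x)}$, and $\sigma(y)$'s round-$r$ pick dominates $\sigma(x)$'s round-$(r+1)$ pick in $V_{\sigma(y)}$, summed over rounds), which never engages your stale case; that case is exactly where the difficulty lives, and handling it requires either restricting the setting (e.g., $\ell=1$, or $m=2$, where a product not held by the earlier customer cannot be exhausted between her turns) or a genuinely different argument from the one you propose.
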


\begin{proof}
	The properties $P_1$ and $P_2$ have been observed by ~\citet{biswas2018fair} and ~\citet{caragiannis2016unreasonable}, respectively. For completeness, we repeat the arguments towards these two properties. Let $x$ and $y$ be two indices, such that $1\leq x<y\leq m$. At each round $r$, the customer $\sigma(x)$ chooses her most desired product among all the unallocated items before customer $\sigma(y)$. Hence, $V_{\sigma(x)}(p^r_{\sigma(x)})~\geq~V_{\sigma(x)}(p^r_{\sigma(y)})$, where $p^r_{\sigma(x)}$ and $p^r_{\sigma(y)}$ denote the items assigned to customer $\sigma(x)$ and $\sigma(y)$, respectively. Thus, over all the rounds,  $\sum_r V_{\sigma(x)}(p^r_{\sigma(x)}) \geq \sum_r V_{\sigma(x)}(p^r_{\sigma(y)})$ which implies that $V_{\sigma(x)}(B_{\sigma_{x}})\geq V_{\sigma(x)}(B_{\sigma_{y}})$ and thus the property $P_1$ holds. 
	
	Property $P_2$ states that if $\sigma(y)$ envies $\sigma(x)$, it will not violate $\EFone$ property (note: we already saw in $P_1$ that $\sigma(x)$ does not envy $\sigma(y)$). Now observe that, the value $V_{\sigma(y)}(\cdot)$ of the item allocated to customer $\sigma(y)$ in the $r$th round is at least that of the item allocated to customer $\sigma(x)$ in the $(r+1)$th round. Let $R$ denote the total number of rounds, then the following holds: 
	\begin{align} 
		&V_{\sigma(y)}(p^r_{\sigma(y)}) \geq V_{\sigma(y)}(p^{r+1}_{\sigma(x)})\quad \mbox{ for all } r \in \{1,\ldots,R-1\}\nonumber \\ 
		\Rightarrow &\sum_{r=1}^{R-1} V_{\sigma(y)}(p^r_{\sigma(y)}) \geq \sum_{r=1}^{R-1} V_{\sigma(y)}(p^{r+1}_{\sigma(x)}) \nonumber\\
		\Rightarrow &V_{\sigma(y)}(B_{\sigma(y)}) \geq V_{\sigma(y)}(B_{\sigma(x)}) -V_{\sigma(y)}(p^1_{\sigma(x)})\label{eq:EF1_greedy}
	\end{align}
	Equation~\ref{eq:EF1_greedy} shows that the customer $\sigma(y)$ stops envying $\sigma(x)$ when only one item is (hypothetically) removed from $\mathcal{B}_{\sigma(x)}$ (namely, $p^1_{\sigma(x)}$). Thus, the allocation $\mathcal{B}$ is $\EFone$, i.e., $P_2$ holds.
	
	The property $P_3$ is satisfied by the use of the feasible sets $F_u$ for each customer $u$. Each $F_u$ contains the set of producers who have not yet been allocated to the customer $u$. At any round $r$, step $7$ of Algorithm~\ref{alg:greedy} selects the most relevant producer among the producers who had not been allocated to $u$ in any earlier rounds $r'<r$. Once, a producer $p$ is allocated to a customer $u$, step $9$ of Algorithm~\ref{alg:greedy} removes $p$ from $F_u$. Thus, each customer is allocated at most one item from the same producer.
	
	The property $P_4$ states that, for any pair of customers $u$ and $w$, the number of allocated items $|B(u)|$ and $|B(v)|$, differ by at most $1$. It is straightforward to see that, except for the last feasible round, all customers are allocated exactly one item at each round. Thus, all the customers receive the same number of allocations until the second last feasible round. In the last feasible round, some customers may not get any allocation (if there is no available feasible product) and thus may receive one item less than the others.
\end{proof}

\noindent We now state the main theorem (Theorem~\ref{theorem:two-sided}) that establishes the fairness guarantees of our proposed algorithm.

\begin{theorem}\label{theorem:two-sided}
	Given $n$ producers, the proposed polynomial time algorithm, $\mathrm{FairRec}$, returns an $\EFone$ allocation among $m$ customers while allocating exactly $k$ items  to each customer, when $k<n\leq mk$. Moreover, it ensures non-zero exposure among all the $n$ producers and $\alpha$-$\MMS$ guarantee among at least $\left(1- \frac{1}{m+1}\left\lfloor\frac{\alpha m k}{n}\right\rfloor\right)$ fraction of the producers. 
\end{theorem}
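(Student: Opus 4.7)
My plan is to decompose the theorem into its four assertions (EF1 for customers, exactly $k$ items per customer, non-zero exposure for every producer, $\alpha$-MMS on a fraction of producers) and to leverage Proposition~\ref{proposition:greedy} as the main tool for the customer-side claims, while reasoning about the termination behavior of Greedy-Round-Robin for the producer-side claims. The polynomial time bound would come from inspecting the two calls to Greedy-Round-Robin, each of which performs at most $mk$ selections of the argmax over at most $n$ candidate products.

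For the customer-side claims, I would first argue that each customer receives exactly $k$ products. Property $P_4$ of Proposition~\ref{proposition:greedy} implies that after Phase 1 the allocation sizes differ by at most one, and the cyclic rotation in Phase 2 is designed precisely to place any ``behind'' customer at the front so that the residual budget $T=(m-x)+m(k-\Lambda)$ top-ups each $|A_u|$ to $k$. For EF1, the subtle point is that EF1 is not preserved under arbitrary unions of allocations, so I cannot simply invoke $P_2$ on each phase. Instead I would redo the round-by-round inequality of $P_2$ on the \emph{concatenated} sequence of rounds (Phase 1 rounds under $\sigma$ followed by Phase 2 rounds under the rotated ordering $\sigma'$), observing that the rotation exactly aligns pick order with pick order, so the bound $V_u(B_u^{r+1}) \le V_u(B_u^r)$ used in the proof of $P_2$ still holds across the boundary after removing one item (the first item picked under $\sigma$ by the envied customer).

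For non-zero exposure I would handle only the regime $\ell=\lfloor\alpha mk/n\rfloor \ge 1$ (the meaningful case for an MMS guarantee). Phase 1 terminates in one of two ways: either $T=0$, in which case every product has been allocated exactly $\ell\ge 1$ times, or some customer $\sigma(i)$ is blocked because $F_{\sigma(i)} \cap \{p:S_p>0\}=\emptyset$. In the blocked case, any product with $S_p=0$ has been fully allocated, and any product with $S_p>0$ must lie outside $F_{\sigma(i)}$, i.e., has already been allocated to $\sigma(i)$ at least once; so every product has positive exposure after Phase~1.

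The main obstacle is the fractional $\alpha$-MMS bound. Call a producer \emph{deficient} if it receives fewer than $\ell$ copies. When Phase 1 terminates early at round $r$, at most $\ell-r$ copies of any product remain unallocated, so the deficient producers are exactly those with strictly fewer than $\ell$ allocations after round $r-1$ or the partial round $r$. I would bound $|P_{\text{deficient}}|$ by counting from the perspective of the blocked customer $\sigma(i)$: since $\sigma(i)$ holds one copy of every product in $\{p:S_p>0\}$ and holds $r-1$ or $r$ items total (by $P_4$), the number of products with $S_p>0$ is at most $r\le\ell$, and the deficiency of each such product is at most $\ell$. Tightening this by $P_4$ (which forces any customer's Phase~1 load to be close to the average $\ell n/m$) and by the fact that the blocked customer has already drawn from $m+1$ disjoint rounds' worth of its feasible neighborhood should yield the claimed ratio $|P_{\text{deficient}}|/n\le \ell/(m+1)$; the arithmetic matching the exact denominator $m+1$ is the step I expect to require the most care, and I would verify it by a small case analysis on whether Phase 1 terminates by the $T=0$ condition or by infeasibility of some $\sigma(i)$.
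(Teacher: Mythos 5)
The most serious gap is in your argument for the $\alpha$-$\MMS$ fraction. The intermediate claims you lean on are false in general: it is not true that ``at most $\ell-r$ copies of any product remain unallocated'' after $r$ rounds (a round contains only $m$ picks, so a given product need not lose a copy every round), and it is not true that the termination round satisfies $r\leq \ell$ (the first phase can run for up to roughly $\ell n/m$ rounds, which exceeds $\ell$ whenever $n>m$). Even if you repair this along the lines you hint at -- the blocked customer holds at most one copy of every producer with $S_p>0$, so the number of deficient producers is at most its load, which $P_4$ ties to the average load $\approx \ell n /m$ -- you only reach a deficiency fraction of order $\ell/m$, not the claimed $\frac{1}{m+1}\left\lfloor\frac{\alpha mk}{n}\right\rfloor$. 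The paper's proof of Lemma~\ref{lemma:alpha-producers} obtains the $m+1$ denominator from a double count that is absent from your sketch: lower-bound the number $Q$ of Phase-1 allocations by $mR$ ($R$ rounds of $m$ picks), upper-bound $Q$ by $n\left(\left\lfloor\frac{\alpha mk}{n}\right\rfloor+\beta-1\right)$ because each satisfied producer contributes at most $\ell$ copies and each unsatisfied one at most $\ell-1$, and combine with $(1-\beta)n\leq R$ (all unsatisfied producers lie in the blocked customer's bundle). Solving these for $\beta$ gives $\beta\geq 1-\frac{1}{m+1}\left\lfloor\frac{\alpha mk}{n}\right\rfloor$; without the per-producer upper bound on $Q$, the stated constant is out of reach, and this is exactly the arithmetic you acknowledge you have not closed.

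Your $\EFone$ argument also has an unaddressed failure point. You concatenate the two phases into one round robin and assert that the telescoping inequality behind $P_2$ ``still holds across the boundary,'' but Phase 2 resets every $S_p$ to $m$, so products whose $\ell$ copies were exhausted in Phase 1 become available again; a customer's pick value can jump upward at the boundary, and the greedy-choice comparison between $w$'s last Phase-1 pick and $u$'s first Phase-2 pick has no justification, since the product $u$ grabs may simply have been unavailable when $w$ picked. The paper does not concatenate: Lemma~\ref{lemma:customer_fairness} applies $P_2$ to the Phase-1 allocation $\mathcal{B}$ alone, argues separately that in Phase 2 each customer receives its top-$(k-|B_u|)$ feasible items so that $V_u(C_u)\geq V_u(C_w)$, and adds the two inequalities. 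Your ``exactly $k$ items'' reasoning and your non-zero-exposure case analysis do match the paper (the latter is Lemma~\ref{lemma:producer_fairness}'s argument phrased contrapositively, under the sensible restriction $\ell\geq 1$), but as written the proposal does not establish the two headline guarantees: $\EFone$ of the final allocation and the $\left(1-\frac{1}{m+1}\left\lfloor\frac{\alpha mk}{n}\right\rfloor\right)$ fraction.
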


\begin{proof}
	We prove the fairness guarantees of customers and producers in Lemma~\ref{lemma:customer_fairness} and \ref{lemma:alpha-producers}, respectively. In Lemma~\ref{lemma:polytime}, we show that $\mathrm{FairRec}$ executes in polynomial time.
\end{proof}

\begin{lemma}\label{lemma:customer_fairness}
	Given $n$ producers, $m$ customers, and a positive integer $k$ (such that $k<n\leq mk$), $\mathrm{FairRec}$ returns an $\EFone$ allocation among $m$ customers while allocating exactly $k$ items to each customer.
\end{lemma}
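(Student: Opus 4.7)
The proof has two obligations: (i) every customer receives exactly $k$ items, and (ii) the combined allocation $\mathcal{A}$ satisfies $\EFone$.

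Obligation~(i) follows by tracing the algorithm. The first phase runs Greedy-Round-Robin until $\ell n$ items have been distributed, where $\ell=\lfloor\alpha m k/n\rfloor$; by property $P_4$ of Proposition~\ref{proposition:greedy} together with $\ell n\leq m k$, every customer ends phase~1 with either $\Lambda$ or $\Lambda+1$ items, and no bundle overshoots $k$. The rotation in phase~2 places the skipped customers $\sigma(x+1),\ldots,\sigma(m)$ at the front of $\sigma'$, so the first $m-x$ allocations of phase~2 lift every bundle to size $\Lambda+1$; the remaining $k-\Lambda-1$ complete rounds then raise each bundle to exactly $k$.

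For obligation~(ii) I would fix two customers $u=\sigma(i)$, $w=\sigma(j)$ and split on their relative orders in $\sigma$ and in $\sigma'$. Applying $P_2$ separately, phase~1's allocation $\mathcal{B}$ is $\EFone$ under $\sigma$ and phase~2's allocation $\mathcal{C}$ is $\EFone$ under $\sigma'$. When the two orderings \emph{disagree} on the pair, $P_1$ applied to the phase in which $u$ picks first eliminates envy in that phase entirely, while $P_2$ in the other phase supplies the single item to remove; summing the two inequalities yields $\EFone$ overall. When $u$ precedes $w$ in both orderings, $P_1$ in each phase gives $V_u(A_u)\geq V_u(A_w)$, so there is no envy at all.

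The delicate case is when $w$ precedes $u$ in both orderings. Here the rotation aligns the partial round $r+1$ of phase~1 with the catch-up portion of phase~2 to complete one round of $\sigma$, producing $|B_u|=|B_w|$ and $|C_u|=|C_w|$. I would exploit the observation that, for any product $p$ already allocated to at least one customer in phase~1, the counter $S_p$ cannot drop to zero during phase~2 (at most $m-c_p\leq m-1$ customers still need $p$, while phase~2 starts with $S_p=m$). Consequently, in phase~2 each customer picks the $V$-best remaining products in $P\setminus B_v$, which lets me apply a top-$s$ majorization bound $V_u(C_u)\geq V_u(C_w\setminus B_u)$. Combining this with $V_u(B_u)\geq V_u(B_w\setminus\{w^1\})$ from $P_2$ on phase~1, and observing that items in $C_w\cap B_u$ lie in $A_u\cap A_w$ and therefore contribute equally to both sides of the envy comparison, I would identify $w^1$---customer $w$'s first phase-1 pick---as the single $\EFone$ witness in $A_w$.

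The main obstacle is precisely this matched-order subcase: a naive combination of the phase-wise $P_2$ envy bounds is loose by exactly one item. Closing this gap requires the specific structural facts about phase~2 (non-binding copy counters and the resulting top-$V$ pick structure) together with a careful bookkeeping argument that the apparent ``extra'' envy from phase~2 is already absorbed by items shared between $A_u$ and $A_w$; this is the technical heart of the proof.
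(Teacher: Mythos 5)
Your proposal follows the same two-phase decomposition as the paper's proof, but it does not close the argument. The paper's proof is short: it takes the phase-1 $\EFone$ inequality $V_u(B_u)\geq V_u(B_w)-V_u(j)$ from property $P_2$ of Proposition~\ref{proposition:greedy}, asserts that in the second phase every customer receives her top-$\Lambda_u$ feasible items and hence $V_u(C_u)\geq V_u(C_w)$, and simply adds the two inequalities. You correctly sense that the phase-2 step is where the delicacy lies (since $C_w$ may contain items of $B_u$ that are infeasible for $u$ in phase 2, the top-pick structure only yields the weaker bound $V_u(C_u)\geq V_u(C_w\setminus B_u)$), but your plan for absorbing the difference does not work as written. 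Adding $V_u(B_u)\geq V_u(B_w)-V_u(w^1)$ to $V_u(C_u)\geq V_u(C_w)-V_u(C_w\cap B_u)$ gives $V_u(A_u)\geq V_u(A_w)-V_u(w^1)-V_u(C_w\cap B_u)$, and the observation that $C_w\cap B_u\subseteq A_u\cap A_w$ does not cancel the deficit: the value of those shared items has already been charged to the left-hand side inside $V_u(B_u)$ when you invoked the phase-1 inequality, so nothing remains to absorb $V_u(C_w\cap B_u)$. Strengthening the phase-1 inequality to $V_u(B_u\setminus C_w)\geq V_u(B_w\setminus\{w^1\})$ is false in general (with small $\ell$, $w$ can acquire in phase 2 every product that $u$ received in phase 1). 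So the case you yourself call the technical heart is genuinely open, and the ingredients you list do not combine to finish it.

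The same difficulty also infects your ``disagree'' cases: when the phase in which $u$ picks first is phase 2, the $P_1$-style conclusion $V_u(C_u)\geq V_u(C_w)$ fails for exactly the $C_w\cap B_u$ reason, because $w$'s phase-2 pick need not have been feasible for $u$ at her earlier turn; $P_1$ and $P_2$ are proved for a single run of Greedy-Round-Robin starting from identical feasible sets, and cannot be applied verbatim to phase 2 in isolation. A complete argument would need a genuinely cross-phase step — for instance an injective, $V_u$-dominating matching of $A_w\setminus A_u$ (minus one witness) into $A_u\setminus A_w$, using that every item $w$ owns but $u$ never owned is dominated by each of $u$'s phase-2 picks, and re-routing phase-1 partners that happen to lie in $A_w$ — none of which appears in your sketch. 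For comparison, the paper performs no case analysis on the orderings at all: it dispatches phase 2 with the single domination claim $V_u(C_u)\geq V_u(C_w)$ and sums the phases, so your proposal is neither the paper's argument nor a self-contained alternative to it.
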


\begin{proof}
	To prove this, we show that both phases of $\mathrm{FairRec}$ satisfy $\EFone$. Since Algorithm~\ref{alg:greedy} guarantees $\EFone$ (by property $P_2$), the allocation $\mathcal{A}$ at step $9$ of $\mathrm{FairRec}$ is $\EFone$. Thus, for any two customers $u$ and $w$, there exists an item $j\in B_w$ such that $V_u(B_u)\geq V_u(B_w)-V_w(j)$. Next, the second phase creates $|U|$ copies of each product and calls Algorithm~\ref{alg:greedy} to obtain the allocation $\mathcal{C}$. Note that the second phase assigns the most valued item to each customer at each round, that is, it allocates top-$\Lambda_u$ feasible producers to each customer, where $\Lambda_u = k-|B_u|$. Thus, $V_u(C_u)\geq V_u(C_w)$. Thus,  $V_u(B_u)+V_u(C_u)\geq V_u(B_w)-V_w(j) + V_u(C_w)$, which implies $\EFone$: $V_u(B_u\cup C_u)\geq V_u(B_w\cup C_w)-V_u(j)$. This completes the proof that $\mathrm{FairRec}$ ensures $\EFone$ among all the customers while recommending exactly $k$ products to each customer.
\end{proof}

We now establish the fairness guarantees of \textit{FairRec} for the producers. For ease of exposition, we first consider $\alpha=1$ to show exact $\MMS$ fairness guarantees are satisfied by at least $(1-\frac{n}{k})$ fraction of the producers (in Lemma~\ref{lemma:producer_fairness}). Subsequently, we provide a stronger guarantee in Lemma~\ref{lemma:alpha-producers}  considering any $\alpha\in(0,1]$. 

\begin{lemma}\label{lemma:producer_fairness}
	Given $n$ producers, $m$ customers, a positive integer $k$ (such that $k<n\leq mk$), and $\alpha=1$, $\mathrm{FairRec}$ ensures non-zero exposure among all the $n$ producers. Moreover, it assures $\MMS$-fairness among at least $n-k$ producers. \end{lemma}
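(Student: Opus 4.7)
The plan is to prove both claims by carefully analyzing the first phase of $\mathrm{FairRec}$; the second phase can only add to existing producer exposures, so every producer-side guarantee established at the end of the first phase is preserved by the final output. The first phase invokes Algorithm~\ref{alg:greedy} with $T = \ell n$, where $\ell = \lfloor mk/n \rfloor$ is the MMS threshold for each producer. The Greedy-Round-Robin terminates in exactly one of two disjoint ways: (Case A) the counter $T$ reaches zero, or (Case B) some customer $\sigma(i^*)$ at some round $r^*$ finds no feasible available product and the routine exits via step~10 of Algorithm~\ref{alg:greedy}.

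First I would dispatch Case A. If $T$ reaches zero, then exactly $\ell n$ items have been allocated. Since each product $p$ contributes exposure $\ell - S_p$ with $S_p \geq 0$, and since these exposures sum to $\ell n$, the only possibility is $S_p = 0$ for every $p$. Every producer therefore attains exactly $\ell$ exposure, giving both MMS-fairness and non-zero exposure for all $n$ producers simultaneously. For Case B, I would use the fact that the algorithm aborts the instant any customer is stuck, so no earlier customer was stuck; in particular, $\sigma(i^*)$ was successfully served in each of the preceding $r^* - 1$ rounds, so $|F_{\sigma(i^*)}| = n - (r^*-1)$. Being stuck means every product in $F_{\sigma(i^*)}$ has $S_p = 0$, i.e., has been allocated exactly $\ell$ times, yielding at least $n - (r^*-1)$ producers that achieve MMS exposure. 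To bound the round index, note that immediately before $\sigma(i^*)$'s failed attempt the total number of allocations is $q = m(r^*-1) + (i^*-1) < \ell n \leq mk$. In particular $m(r^*-1) < mk$, so $r^* - 1 < k$ and thus $n - (r^*-1) \geq n - k + 1 > n - k$, as required.

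For the non-zero exposure claim in Case B, I would argue by contradiction. Suppose some product $p$ is never picked during the first phase; then $S_p = \ell > 0$ at termination, so $p$ is available. Since $\sigma(i^*)$ is stuck, every product in $F_{\sigma(i^*)}$ must be unavailable, so $p \notin F_{\sigma(i^*)}$. By definition of the feasible set this would force $p$ to lie in $\sigma(i^*)$'s current bundle, which means $p$ was in fact picked (by $\sigma(i^*)$) at some earlier round, contradicting our assumption. Hence every product is allocated at least once in the first phase, and non-zero exposure holds for all $n$ producers.

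The main obstacle is correctly handling the stuck-termination case: one has to observe that the termination rule of Algorithm~\ref{alg:greedy} forces every earlier customer in every earlier round to have been served, so $\sigma(i^*)$'s feasible set is known exactly and must be entirely composed of exhausted products. Translating ``exhausted'' into $\ell$ copies of each such product being out in the allocation is what yields the MMS count, and combining $\ell n \leq mk$ with the pre-allocation count $q < \ell n$ is what pins $r^*$ to at most $k$. Once these observations are in place, the pigeonhole contradiction for non-zero exposure falls out immediately.
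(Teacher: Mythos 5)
Your proof is correct and follows essentially the same route as the paper's: a case split on the two termination conditions of Greedy-Round-Robin, the observation that the stuck customer's feasible set consists only of exhausted producers (each allocated all $\ell$ copies), a count of that set's size, and the remark that the second phase only adds exposure. The one welcome refinement is that you derive the round bound $r^*-1<k$ (equivalently $|B_{\sigma(i^*)}|\le k$, hence $|F_{\sigma(i^*)}|\ge n-k$) explicitly from the allocation count $m(r^*-1)+(i^*-1)<\ell n\le mk$, a step the paper asserts without justification.
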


\begin{proof}
	We first prove that the first phase guarantees non-zero exposure for producers. The allocation $\mathcal{B}$ obtained by Algorithm~\ref{alg:greedy} in the first phase may have terminated for one of the two conditions
	\begin{enumerate}[leftmargin=*]
		\item $T==0$: this means that all the $\ell=\lfloor\frac{mk}{n}\rfloor$ copies of each producer have been allocated among all the customers. Thus, each producer receives exactly the maximin threshold $\ell$. Hence $\MMS$ fairness is achieved by all the $n$ producers. 
		\item $p==\emptyset$: this happens when $T\neq 0$ and $\sum_{p\in F_u} S_p = 0$ for a customer $u$ (at termination). That is, $S_p=0$ for each producer $p\in F_u$. Thus, all $\ell=\lfloor\frac{mk}{n}\rfloor$ copies of the producers in the set $F_u$ have been allocated, and hence they attain $\MMS$ fairness. On the other hand, the producers in the set $B_u$ (the set recommended to customer $u$) is allocated to at least one producer. Thus, a minimum value of $1$ is achieved by all the $n$ producers. Also, $|F_u|+|B_u|=n$ and $|B_u|\leq k$, implies that $|F_u|\geq n-k$. Therefore, at least $n-k$ producers attain $\MMS$-fairness.
	\end{enumerate}
	
	Since the thresholds are already satisfied in the first phase, adding more allocations in the second phase retains the threshold-based fairness guarantees. This completes the proof that $\mathrm{FairRec}$ ensures a non-zero exposure among all the $m$ producers and assures $\MMS$-fairness among at least $n-k$ producers.
\end{proof}

One consequence of Lemma~\ref{lemma:producer_fairness} is that, when $k$ is much lower than $n$, a large fraction of producers are guaranteed to attain $\MMS$ fairness. We formally state this property of $\mathrm{FairRec}$ algorithm in Corollary~\ref{lemma:corollary}.

\begin{corollary}\label{lemma:corollary}
	Given $n$ producers, a positive integer $k$, $\alpha=1$, and $\beta\in(0,1)$ such that $k\leq\beta n$, $\mathrm{FairRec}$ ensures $\MMS$-fairness among at least $(1-\beta)n$ producers.
\end{corollary}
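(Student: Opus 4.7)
The statement follows almost immediately from Lemma~\ref{lemma:producer_fairness}, so my plan is a short substitution argument rather than a new construction. The plan is to invoke Lemma~\ref{lemma:producer_fairness} directly to obtain an $\MMS$-fair subset of size at least $n-k$, and then use the hypothesis $k\le \beta n$ to lower-bound this quantity by $(1-\beta)n$.

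First I would observe that the hypotheses of Corollary~\ref{lemma:corollary} are compatible with those of Lemma~\ref{lemma:producer_fairness}: we have $\alpha=1$, and $k$ is a positive integer with $k\le \beta n < n$, so the required condition $k<n$ from the lemma holds (the other condition $n\le mk$ is an implicit structural assumption of the model, carried over from the setup of $\mathrm{FairRec}$ in \cref{algorithm}). Hence Lemma~\ref{lemma:producer_fairness} applies and $\mathrm{FairRec}$ ensures that at least $n-k$ producers receive a bundle of value at least their $\MMS$ threshold.

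Next I would substitute the bound $k\le \beta n$ into this count: $n-k \ge n-\beta n = (1-\beta)n$. Since the set of $\MMS$-fair producers has size at least $n-k$, it therefore has size at least $(1-\beta)n$, which is exactly the claim.

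There is no real obstacle here; the only thing to be a little careful about is book-keeping with the integrality of $n-k$ versus the (possibly non-integer) lower bound $(1-\beta)n$. Because $n-k$ is an integer that is at least $(1-\beta)n$, we can state the conclusion either as ``at least $\lceil(1-\beta)n\rceil$ producers'' or, more loosely as in the statement, ``at least $(1-\beta)n$ producers,'' and both are valid. No further machinery beyond Lemma~\ref{lemma:producer_fairness} is needed.
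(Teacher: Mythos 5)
Your proposal is correct and matches the paper's reasoning: the corollary is stated as an immediate consequence of Lemma~\ref{lemma:producer_fairness}, obtained by substituting $k\leq\beta n$ into the bound of $n-k$ $\MMS$-fair producers, exactly as you do.
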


\begin{lemma}\label{lemma:alpha-producers}
Given $n$ producers, $m$ customers, recommendation size $k$ such that $k<n<mk$, and a fixed value $\alpha\in(0,1]$, \textit{FairRec} ensures a minimum exposure of $1$ for all the producers and an $\alpha$-$\MMS$ guarantee to at least $\left(1- \frac{1}{m+1}\left\lfloor\frac{\alpha m k}{n}\right\rfloor\right)$ fraction of the producers.
\end{lemma}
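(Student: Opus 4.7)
The argument rests on analysing Phase~1 of \textit{FairRec}, which is the only phase that caps each producer's exposure at $\ell := \lfloor \alpha mk/n\rfloor$ copies; Phase~2 only adds further allocations, so any lower bound on exposure established after Phase~1 is preserved in the final output. I would begin by splitting into the two possible termination modes of the underlying Greedy-Round-Robin invocation (Algorithm~\ref{alg:greedy}): either $T$ reaches $0$ after all $\ell n$ copies have been distributed, or at some round $r$ the customer $u=\sigma(i)$ encounters $F_u\cap\{p:S_p\neq 0\}=\emptyset$. The first case is immediate: every producer receives exactly $\ell$ copies, which gives $\alpha$-$\MMS$ to all $n$ producers and a minimum exposure of $\ell\geq 1$.

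In the second (interesting) case, the round-robin structure forces $|B_u|=r-1$ and $|F_u|=n-(r-1)$. Since termination was triggered by $S_p=0$ for all $p\in F_u$, every such producer has $E_p=\ell$ and hence satisfies $\alpha$-$\MMS$. Every $p\in B_u$ has $E_p\geq 1$ because $u$ received a copy of it, which establishes the ``minimum exposure of $1$'' half of the claim. Thus the only producers that can violate $\alpha$-$\MMS$ lie in $B_u$ and have $E_p<\ell$; let $s$ denote their number. A trivial bound is $s\leq |B_u|=r-1$.

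The crux of the proof is to obtain a second, complementary upper bound on $s$ by double-counting the ``deficit'' $\sum_{p\in B_u}(\ell-E_p)$. Using $\sum_{p\in F_u}E_p=\ell|F_u|$, this deficit equals $\ell n - t$, where $t$ is the total number of items actually allocated in Phase~1. Each producer counted by $s$ contributes at least one unit to the deficit, so $s\leq \ell n-t$. Because every round completed before the terminating one allocates exactly $m$ items, $t\geq (r-1)m$, and therefore $s\leq \ell n-(r-1)m$ as well.

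Combining the two bounds gives $s\leq \min\bigl(r-1,\ \ell n-(r-1)m\bigr)$. The mildly nontrivial last step, and the one I expect to be the cleanest place to lose a reader, is observing that as a function of $x:=r-1$ the right-hand side is maximised at the balance point $x(m+1)=\ell n$, where the two bounds cross and both equal $\ell n/(m+1)$. This yields $s\leq \ell n/(m+1)$, so the fraction of producers satisfying $\alpha$-$\MMS$ after Phase~1 is at least $1-\ell/(m+1)$. Since Phase~2 can only increase exposures, this fraction bound and the minimum exposure of~$1$ are both inherited by the final allocation returned by \textit{FairRec}, which closes the argument.
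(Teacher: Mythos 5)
Your proof is correct and follows essentially the same route as the paper: it bounds the number of $\alpha$-$\MMS$ violators after Phase~1 both by the number of completed round-robin rounds and by the allocation deficit $\ell n - t$, and then combines the two. The paper carries out the identical computation by algebraically eliminating the round count from the inequalities $Q \geq mR$, $Q \leq n(\ell + \beta - 1)$, and $\beta n \geq n - R$, whereas you take the crossing point of your two bounds on $s$; both yield the same $1 - \frac{1}{m+1}\left\lfloor\frac{\alpha m k}{n}\right\rfloor$ guarantee.
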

\begin{proof}
The first phase of  \textit{FairRec} terminates with one of the two (mutually exclusive) options, either $T==0$ or $p== \emptyset$ (line~$16$ and line $8$ of Algorithm~\ref{alg:greedy}). The case $T==0$ ensures that all the producers achieve $\alpha$-$\MMS$ exposure guarantee. However, in the other case, when $p==\emptyset$, some producers may not achieve $\alpha$-$\MMS$ guarantee. We now provide a lower bound on the fraction of producers who achieve an $\alpha$-$\MMS$ guarantee after the first phase. 

Let $R$ be the number of rounds after which the first phase of the algorithm \textit{FairRec} terminates, and $Q$ be 
the total number of allocations that occurred in the first phase. Therefore,
\begin{equation}
Q\geq mR \label{eq:Qlow}
\end{equation}

Also, let $\beta$ be the fraction of producers who achieved $\alpha$-$\MMS$ exposure guarantee after the first phase. Thus, the total number of allocation $Q$ can be upper bounded as follows:
\begin{eqnarray}
Q & \leq & \beta n \left\lfloor\frac{\alpha m k}{n}\right\rfloor + (1-\beta) n \left(\left\lfloor\frac{\alpha m k}{n}\right\rfloor -1 \right)\nonumber \\
& \leq & n\left(\left\lfloor\frac{\alpha m k}{n}\right\rfloor + \beta -1 \right)\label{eq:Qup}
\end{eqnarray}

Combining Inequalities~\ref{eq:Qlow} and \ref{eq:Qup}, we obtain,
\begin{eqnarray}
mR & \leq & n\left(\left\lfloor\frac{\alpha m k}{n}\right\rfloor + \beta -1 \right)\nonumber\\
\beta & \geq & \frac{mR}{n} - \left\lfloor\frac{\alpha m k}{n}\right\rfloor + 1 \label{eq:beta_low}
\end{eqnarray}

Moreover, using the fact that at least $n-R$ producers achieve $\alpha$-$\MMS$ exposure guarantee, we obtain,
\begin{eqnarray}
\beta n & \geq & n - R\nonumber\\
\Rightarrow R & \geq & (1-\beta)n \label{eq:R_geq}
\end{eqnarray}

Using the lower bound on $R$ (Inequality~\ref{eq:R_geq}), the Inequality~\ref{eq:beta_low} can be rewritten as:
\begin{eqnarray}
\beta & \geq & m(1-\beta) - \left\lfloor\frac{\alpha m k}{n}\right\rfloor + 1\nonumber\\
\Rightarrow \beta (1+m)& \geq & m + 1 -  \left\lfloor\frac{\alpha m k}{n}\right\rfloor \nonumber\\
\Rightarrow \beta & \geq & 1 - \left\lfloor\frac{\alpha m k}{n}\right\rfloor\left(\frac{1}{m+1}\right)\label{eq:beta_final}
\end{eqnarray}

The Inequality \ref{eq:beta_final} implies that the fraction of producers who achieve $\alpha$-$\MMS$ guarantee is at least $1- \frac{1}{m+1}\left\lfloor\frac{\alpha m k}{n}\right\rfloor$. 
\end{proof}

Finally, in Lemma~\ref{lemma:polytime}, we show that $\mathrm{FairRec}$ executes in polynomial time.

\begin{lemma}\label{lemma:polytime}
	The time complexity of $\mathrm{FairRec}$ has a worst case bound of $\mathcal{O}(mnk)$.
\end{lemma}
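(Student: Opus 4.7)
The plan is to bound the running time by accounting for (i) the initialization of the data structures used by \textit{FairRec}, (ii) the total number of allocation events that can occur across both phases, and (iii) the cost of each individual allocation event inside \textsc{Greedy-Round-Robin}. Since both phases of \textit{FairRec} simply invoke \textsc{Greedy-Round-Robin} on the same underlying bookkeeping vectors $S$, $F_u$, and $\sigma$, it suffices to bound the total cost of these two calls plus the $\mathcal{O}(mn)$ preprocessing required to initialize the feasible sets $F_u \leftarrow P$ for every customer $u \in [m]$.

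First I would bound the total number of allocation events (i.e., iterations of the inner loop of Algorithm~\ref{alg:greedy}) across both invocations. By Lemma~\ref{lemma:customer_fairness}, \textit{FairRec} allocates exactly $k$ items to each of the $m$ customers; consequently, the combined number of allocation events over the two calls to \textsc{Greedy-Round-Robin} is exactly $mk$ (no event can occur in the second phase for a customer who already holds $k$ items, because step~$7$ of Algorithm~\ref{alg:greedy} restricts the argmax to her residual feasible set and step~$13$ strips the chosen product from $F_u$, while property $P_4$ of Proposition~\ref{proposition:greedy} together with the left-cyclic rotation of $\sigma$ guarantees that the second call tops up precisely the customers who received fewer items in the first call). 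Thus the accounting reduces to the per-event cost.

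Next I would bound the per-event cost. The only non-constant work inside the inner loop is step~$7$, which computes $\argmax_{p' \in F_{\sigma(i)} : S_{p'} \neq 0} V_{\sigma(i)}(p')$. Since $|F_u| \leq n$ for every customer $u$ at all times, this argmax can be evaluated by a single linear scan in $\mathcal{O}(n)$ time, and the subsequent updates to $B_{\sigma(i)}$, $F_{\sigma(i)}$, $S_p$, and $T$ take constant time. Combining the $mk$ allocation events, each costing $\mathcal{O}(n)$, with the $\mathcal{O}(mn)$ initialization of the $m$ feasible sets and the $\mathcal{O}(n)$ initialization of $S$, we get an overall worst-case bound of $\mathcal{O}(mn + mnk) = \mathcal{O}(mnk)$, as claimed.

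The only mildly delicate point in this argument---and the main thing I would want to nail down carefully---is justifying that the two \textsc{Greedy-Round-Robin} calls together contribute at most $mk$ inner-loop iterations rather than something like $2mk$. This follows because the second phase reuses the same $F_u$ vectors left behind by the first phase, the total exit counter is $T = (m-x) + m(k - \Lambda)$ (which equals the number of slots still needed to reach exactly $k$ items per customer, as the left-cyclic rotation of $\sigma$ aligns the residual deficits), and no iteration of the inner loop is executed once $T$ reaches zero. With this accounting verified, the remaining arithmetic is routine and the $\mathcal{O}(mnk)$ bound follows.
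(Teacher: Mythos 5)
Your proposal is correct and follows essentially the same argument as the paper: both bound the total number of allocation events across the two phases by $mk$ (since each customer ends with exactly $k$ items) and charge $\mathcal{O}(n)$ per event for the feasibility-restricted argmax, giving $\mathcal{O}(mnk)$. The extra bookkeeping you add about the second-phase counter and the initialization cost is consistent with, and subsumed by, the paper's shorter accounting.
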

\begin{proof}
	The time complexity of $\mathrm{FairRec}$ is the same as the complexity of Algorithm~\ref{alg:greedy}. Over the two phases, Algorithm~\ref{alg:greedy} allocates $mk$ items. For each allocation, it finds the maximum possible feasible producer which can be done in at most $\mathcal{O}(n)$ time. Thus, the total time complexity of the algorithm is $\mathcal{O}(mnk)$.
\end{proof}



%

\section{experimental evaluation}\label{experiments}
{\bf Experimental Setup and Baselines:}
We run the proposed {\em FairRec} algorithm (\cref{algorithm}) for all three datasets (as listed in \cref{dataset}) considering different values of the recommendation-size $k$.
For comparison, we use the following methods as baselines.\\
{\bf (1) Top-$k$:} This is the traditional way of recommending the top-$k$ relevant products.\\
{\bf (2) Random-$k$:} Here, we randomly recommend $k$ products to all customers. Random recommendations can give equal chance to all producers, thus can serve as a baseline which has only producers' interest in mind.\\
{\bf (3) Poorest-$k$:} Unlike random-$k$, this is a deterministic producer-centric method where $k$ least exposed products are recommended to each customer in a round robin manner.\\
Note that poorest-k and random-k are not real recommendation algorithms, and we consider them as baselines in the paper because of their theoretical property of bringing down inequality among producers and serve as some of the most equitable options from the producers' perspective.\\
{\bf (4) MixedTR-$k$:} Here, we choose top $\left\lceil\frac{k}{2}\right\rceil$ relevant products at first and then the remaining $\big(k-\left\lceil\frac{k}{2}\right\rceil\big)$ randomly, thus making it a mix of top-$\frac{k}{2}$ and random-$\frac{k}{2}$. In mixedTR-$k$, the top-$\frac{k}{2}$ could help keep high customer utility while random-$\frac{k}{2}$ could help in improving provider-side performance by giving equal chances to all producers to appear in the second half.\\
{\bf (5) MixedTP-$k$:} Here, we choose top $\left\lceil\frac{k}{2}\right\rceil$ relevant products at first and then the poorest $\big(k-\left\lceil\frac{k}{2}\right\rceil\big)$ producers, thus making it a mix of top-$\frac{k}{2}$ and poorest-$\frac{k}{2}$.
In mixedTP-$k$, the top-$\frac{k}{2}$ could help keep high customer utility while poorest-$\frac{k}{2}$ could help in improving the exposure of under-exposed providers by giving them a chance to appear in the second half.\\
{\bf (6) MPB19:} We use the method proposed by \citet{abdollahpouri2019managing} as a baseline here. The proposal by \citet{abdollahpouri2019managing} is to consider the relevance scores as intermediate scores and then add a benefit term for less exposed producers to promote diversity thereby reducing popularity bias. In our problem setting, we consider the relevance score of product $p$ to customer $u$ $V_u(p)$ as intermediate scores and express the modified relevance score as $0.5\times V_u(p)+0.5\times \big(1-\frac{E_p}{\sum_{p'\in P}E_{p'}}\big)$; the second part of the modified score promotes less exposed producers.\\
{\bf (7) MSR18:} \citet{surer2018multistakeholder} proposed to introduce producer-side constraints similar to exposure guarantees in our paper and then optimize overall customer utility. However, as the proposed constrained optimization problem becomes a very hard combinatorial problem, the authors did a Lagrangian relaxation and proposed to use iterative subgradient method to optimize the relaxed problem. Although this methodology is not quite suitable for large scale online platforms as optimizing the hard combinatorial problem could demand huge computing resources and time, we use this as a baseline by limiting the number of iterations to $100$.

~\\{\bf Experiments:} We run three sets of experiments. First in \cref{mms_experiments}, we set the exposure guarantee as $\overline{E}=$MMS and run FairRec.
However, the platforms may not always want to ensure the maximum possible exposure guarantee for the producers as it might cause degradation of customer utility.
They might want to set a lower exposure guarantee in such cases.
Thus, we set lower exposure guarantees i.e., by considering $\overline{E}=\alpha \cdot$MMS where $0 \leq \alpha \leq 1$ (in \cref{alpha_mms_experiments}). 
While \cref{mms_experiments} and \cref{alpha_mms_experiments} show the efficacies of {\em FairRec}, \cref{subsec:phase_interplay} digs deeper into the functioning of {\em FairRec} and discusses how phases $1$ and $2$ of {\em FairRec} work towards better performance on producer and customer sides.
Finally in \cref{subsec:differential_alpha}, we also test {\em FairRec} for scenarios where the platforms may want to ensure different levels of exposure guarantee for different categories of producers.
For evaluating \textit{FairRec} and the baselines, we use the following \textit{producer-side} and \textit{customer-side} metrics.
		\subsubsection{\bf Producer-Side Metrics}
		The evaluation metrics for capturing the fairness and efficiency among the producers are:
		
		\noindent \textbf{\textit{Fraction of Satisfied Producers ($H$):}}
		We call a producer satisfied iff its exposure is more than the minimum exposure guarantee $\overline{E}$. The fraction of \textit{satisfied} producers can be calculated as below.
		\begin{equation}\small
			H=\frac{1}{|P|}\sum_{p \in P}\mathbbm{1}_{E_p\geq \overline{E}}\label{metric_H}
		\end{equation}
		{$\mathbbm{1}_{E_p\geq \overline{E}}$ is $1$ if $E_p\geq \overline{E}$, otherwise $0$.} The value of $H$ ranges between $0$ and $1$. The higher the $H$, the fairer is the recommender system to producers.
		
		\noindent \textbf{\textit{Inequality in Producer Exposures ($Z$):}}
		We earlier observed in \cref{motivation} that conventional top-$k$ recommendation causes huge disparity in individual producer exposures.
		To capture how unequal the individual producer exposures are, we employ an entropy-like measure as below.
		\begin{equation}\small
			Z=-\sum_{p\in P}\Big(\frac{E_p}{m\times k}\Big)\cdot\log_{n}\Big(\frac{E_p}{m\times k}\Big)\label{metric_Z}
		\end{equation}
		Note that the base of the logarithm above is $n$ which is the number of producers. 
		Since each of the $m$ customers is given $k$-sized recommendations, total available exposure is $mk$.
		If every producer gets same exposure, i.e., recommended exactly $\frac{mk}{n}$ times, then the above entropy metric will be: {\small $-n\times\frac{mk/n}{mk}\times \log_n\frac{mk/n}{mk} = -n\times\frac{1}{n}\times \log_n\frac{1}{n} = -n\times\frac{1}{n}\times(-1)=1$}. 
		On the other hand, if only one producer is allowed to get all the exposure, then the value entropy expression in equation-12 will be {\small $-\frac{mk}{mk}\times\log_n\frac{mk}{mk}=0$.
		Thus the range of $Z$ is $[0,1]$}.
		The lower the $Z$, the more unequal individual producer exposures are.
		
		\noindent \textbf{\textit{Exposure Loss on Producers ($L$):}} 
		As {\em FairRec} tries to ensure minimum exposure guarantee for all the producers,
		some producers may receive a lower exposure in comparison to what they would have got in top-$k$ recommendations. To capture this, we compute the loss $L$ as the mean amount of impact (loss in exposure) caused by {\em FairRec}, compared to the top-$k$ recommendations.
		\begin{equation}\small
			L^\text{<method>}=\frac{1}{n}\sum_{p \in P}\text{max}\Bigg(\frac{\big(E^\text{top-$k$}_p-E^\text{<method>}_p\big)}{E^\text{top-$k$}_p},0\Bigg)\label{metric_L}
		\end{equation}
		This metric takes the top-k recommendations (which is the regular recommendation based on the estimated relevance scores, but with no additional constraints) as a reference point, and then evaluates how much exposure is lost on average if some other method is used for the recommendations.
		The lower the exposure loss metric, lower is the negative impact, and the better is the recommendation algorithm.
		
		\subsubsection{\bf Customer-Side Metrics}\label{sec:cust_metric}
		The evaluation metrics for capturing the fairness and efficiency among the customers are:
		
		\noindent \textbf{\textit{Mean Average Envy ($Y$):}}
		Although {\em FairRec} ensures {\em EF1} guarantee for customers by design, here we capture how effectively this guarantee can reduce overall envy among customers in comparison to the baselines.
		We define the mean average envy as below.
		\begin{equation}\small
			Y=\dfrac{1}{n}\sum_{u\in U}\dfrac{1}{n-1}\sum_{\substack{u'\in U\\ u'\ne u}}\text{envy}(u,u')\label{metric_Y}
		\end{equation}
		where {\small $\text{envy}(u,u')=\text{max}\Big(\big(\phi_u(R_{u'})-\phi_u(R_u)\big),0\Big)$} denoting how much $u$ envies $u'$, which is the extra utility $u$ would have received if she had received the recommendation that had been given to $u'$ ($R_{u'}$) instead of her own allocated recommendation $R_u$. The lower the envy ($Y$), the fairer the recommender system is for the customers.
		
		\noindent \textbf{\textit{Mean and Standard Deviation of Customer Utilities (using {\small $\mu_\phi$, $std_\phi$}):}}
		{\em FairRec} may not allocate the most relevant products to the customers,
		which may introduce a loss in customer utilities. This loss can be captured using the expression {\small $\mu_\phi=\frac{1}{m}\sum_{u\in U}\phi_{u}(R_{u})$}. 
		Higher the utility (i.e., lower utility loss), the more efficient is the recommender system for the customers. 
		We also calculate the standard deviation of customer utilities, that is, {\small $std_\phi = std_{u\in U} (\phi_u(R_u))$}. The lower the $std$, lesser is the disparity in individual customer utilities.
\begin{figure*}[t!]
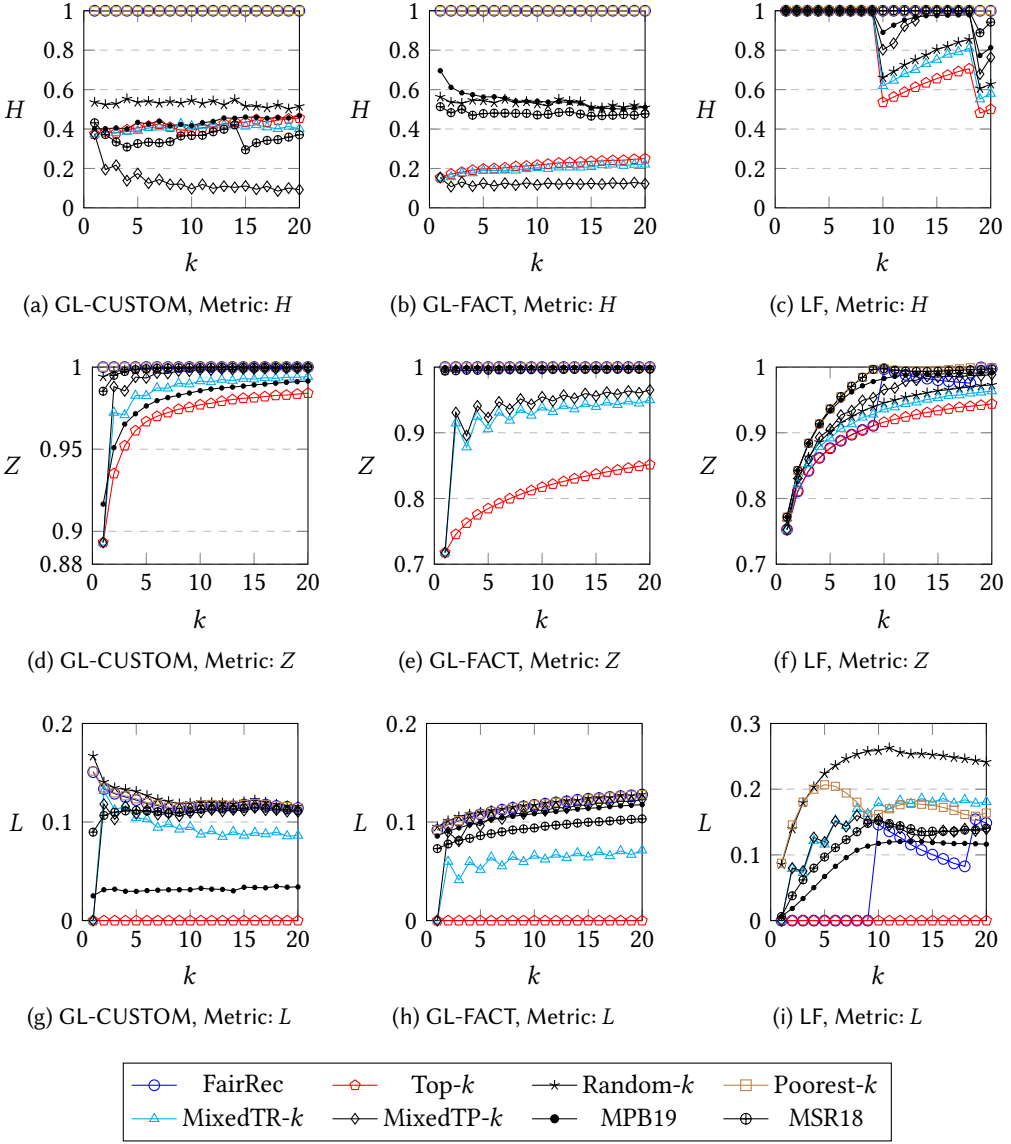

	\center{
		\subfloat[{GL-CUSTOM, Metric: \hyperref[metric_H]{$H$}}]{\pgfplotsset{width=0.32\textwidth,height=0.3\textwidth,compat=1.9}
			\input{figures/gl_1_H}\label{fig:gl_1_H}}
		\hfil
		\subfloat[{GL-FACT, Metric: \hyperref[metric_H]{$H$}}]{\pgfplotsset{width=0.32\textwidth,height=0.3\textwidth,compat=1.9}
			\input{figures/gl_2_H}\label{fig:gl_2_H}}
		\hfil
		\subfloat[{LF, Metric: \hyperref[metric_H]{$H$}}]{\pgfplotsset{width=0.32\textwidth,height=0.3\textwidth,compat=1.9}
			\input{figures/lf_H}\label{fig:lf_H}}
		\hfil
		\subfloat[{GL-CUSTOM, Metric: \hyperref[metric_Z]{$Z$}}]{\pgfplotsset{width=0.32\textwidth,height=0.3\textwidth,compat=1.9}
			\input{figures/gl_1_Z}\label{fig:gl_1_Z}}
		\hfil
		\subfloat[{GL-FACT, Metric: \hyperref[metric_Z]{$Z$}}]{\pgfplotsset{width=0.32\textwidth,height=0.3\textwidth,compat=1.9}
			\input{figures/gl_2_Z}\label{fig:gl_2_Z}}
		\hfil
		\subfloat[{LF, Metric: \hyperref[metric_Z]{$Z$}}]{\pgfplotsset{width=0.32\textwidth,height=0.3\textwidth,compat=1.9}
			\input{figures/lf_Z}\label{fig:lf_Z}}
		\hfil
		\subfloat[{GL-CUSTOM, Metric: \hyperref[metric_L]{$L$}}]{\pgfplotsset{width=0.32\textwidth,height=0.3\textwidth,compat=1.9}
			\input{figures/gl_1_L}\label{fig:gl_1_L}}
		\hfil
		\subfloat[{GL-FACT, Metric: \hyperref[metric_L]{$L$}}]{\pgfplotsset{width=0.32\textwidth,height=0.3\textwidth,compat=1.9}
			\input{figures/gl_2_L}\label{fig:gl_2_L}}
		\hfil
		\subfloat[{LF, Metric: \hyperref[metric_L]{$L$}}]{\pgfplotsset{width=0.32\textwidth,height=0.3\textwidth,compat=1.9}
			\input{figures/lf_L}\label{fig:lf_L}}
		\vfil
		\subfloat{\pgfplotsset{width=.7\textwidth,compat=1.9}
			\begin{tikzpicture}
			\begin{customlegend}[legend entries={{FairRec},{Top-$k$},{Random-$k$},{Poorest-$k$},{MixedTR-$k$},{MixedTP-$k$},{MPB19},{MSR18}},legend columns=4,legend style={/tikz/every even column/.append style={column sep=0.3cm}}]
			\addlegendimage{blue,mark=o,sharp plot}
			\addlegendimage{red,mark=pentagon,sharp plot}
			\addlegendimage{black,mark=star,sharp plot}
			\addlegendimage{brown,mark=square,sharp plot}
			\addlegendimage{cyan,mark=triangle,sharp plot}
			\addlegendimage{black,mark=diamond,sharp plot}
			\addlegendimage{black,mark=*,mark size=1.5pt,sharp plot}
			\addlegendimage{black,mark=oplus,sharp plot}
			\end{customlegend}
			\end{tikzpicture}}
	}\caption{Producer-Side Performances with MMS Guarantee. First row: fraction of satisfied producers (\hyperref[metric_H]{$H$}). Second row: inequality in producer exposures (\hyperref[metric_Z]{$Z$}). Third row: exposure loss on producers (\hyperref[metric_L]{$L$}).}\label{fig:producer_side_mms}
\end{figure*}
\subsection{Experiments with MMS Guarantee}\label{mms_experiments}
\begin{figure*}[t!]
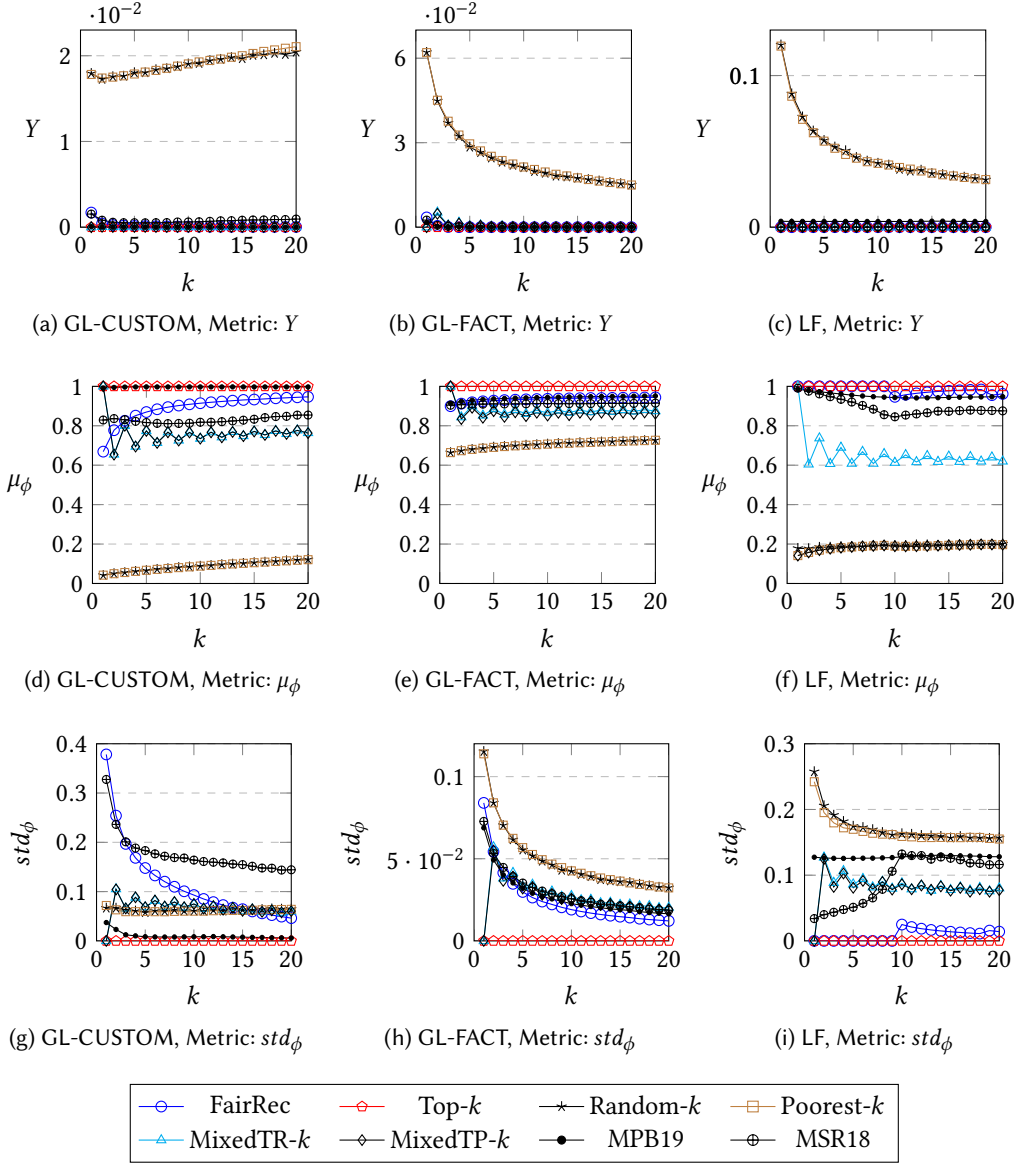

	\center{
		\subfloat[{GL-CUSTOM, Metric: \hyperref[metric_Y]{$Y$}}]{\pgfplotsset{width=0.32\textwidth,height=0.3\textwidth,compat=1.9}
			\input{figures/gl_1_Y}\label{fig:gl_1_Y}}
		\hfil
		\subfloat[{GL-FACT, Metric: \hyperref[metric_Y]{$Y$}}]{\pgfplotsset{width=0.32\textwidth,height=0.3\textwidth,compat=1.9}
			\input{figures/gl_2_Y}\label{fig:gl_2_Y}}
		\hfil
		\subfloat[{LF, Metric: \hyperref[metric_Y]{$Y$}}]{\pgfplotsset{width=0.32\textwidth,height=0.3\textwidth,compat=1.9}
			\input{figures/lf_Y}\label{fig:lf_Y}}
		\vfil
		\subfloat[{GL-CUSTOM, Metric: \hyperref[sec:cust_metric]{${\mu_{\phi}}$}}]{\pgfplotsset{width=0.32\textwidth,height=0.3\textwidth,compat=1.9}
			\input{figures/gl_1_mu}\label{fig:gl_1_mu}}
		\hfil
		\subfloat[{GL-FACT, Metric: \hyperref[sec:cust_metric]{${\mu_{\phi}}$}}]{\pgfplotsset{width=0.32\textwidth,height=0.3\textwidth,compat=1.9}
			\input{figures/gl_2_mu}\label{fig:gl_2_mu}}
		\hfil
		\subfloat[{LF, Metric: \hyperref[sec:cust_metric]{${\mu_{\phi}}$}}]{\pgfplotsset{width=0.32\textwidth,height=0.3\textwidth,compat=1.9}
			\input{figures/lf_mu}\label{fig:lf_mu}}
		\vfil
		\subfloat[{GL-CUSTOM, Metric: \hyperref[sec:cust_metric]{${std_\phi}$}}]{\pgfplotsset{width=0.3\textwidth,height=0.3\textwidth,compat=1.9}
			\input{figures/gl_1_sigma}\label{fig:gl_1_sigma}}
		\hfil
		\subfloat[{GL-FACT, Metric: \hyperref[sec:cust_metric]{${std_\phi}$}}]{\pgfplotsset{width=0.3\textwidth,height=0.3\textwidth,compat=1.9}
			\input{figures/gl_2_sigma}\label{fig:gl_2_sigma}}
		\hfil
		\subfloat[{LF, Metric: \hyperref[sec:cust_metric]{${std_\phi}$}}]{\pgfplotsset{width=0.3\textwidth,height=0.3\textwidth,compat=1.9}
			\input{figures/lf_sigma}\label{fig:lf_sigma}}
		\vfil
		\subfloat{\pgfplotsset{width=.7\textwidth,compat=1.9}
			\begin{tikzpicture}
			\begin{customlegend}[legend entries={{FairRec},{Top-$k$},{Random-$k$},{Poorest-$k$},{MixedTR-$k$},{MixedTP-$k$},{MPB19},{MSR18}},legend columns=4,legend style={/tikz/every even column/.append style={column sep=0.3cm}}]
			\addlegendimage{blue,mark=o,sharp plot}
			\addlegendimage{red,mark=pentagon,sharp plot}
			\addlegendimage{black,mark=star,sharp plot}
			\addlegendimage{brown,mark=square,sharp plot}
			\addlegendimage{cyan,mark=triangle,sharp plot}
			\addlegendimage{black,mark=diamond,sharp plot}
			\addlegendimage{black,mark=*,mark size=1.5pt,sharp plot}
			\addlegendimage{black,mark=oplus,sharp plot}
			\end{customlegend}
			\end{tikzpicture}}
	}\caption{Customer-Side Performances with MMS Guarantee. First row: mean average envy (\hyperref[metric_Y]{$Y$}). Second row: mean customer utility (\hyperref[sec:cust_metric]{$\mu_\phi$}). Third row: standard deviation of customer utilities (\hyperref[sec:cust_metric]{$std_\phi$}).}\label{fig:customer_side}
\end{figure*}
		Here we test {\em FairRec} with exposure guarantee $\overline{E}=\left\lfloor\frac{mk}{n}\right\rfloor=$MMS (or $\alpha=1$), recommendation size $k$ in $1$ to $20$, and discuss the results. 
		\subsubsection{\bf Producer-Side Results}\label{producer_side}
		All producer-side results are plotted in Figure-\ref{fig:producer_side_mms}.
		~\\{\bf Producer Satisfaction (\hyperref[metric_H]{$H$}):}
		Figures \ref{fig:gl_1_H}, \ref{fig:gl_2_H}, and \ref{fig:lf_H} show that both {\em FairRec} and poorest-$k$ perform the best while top-$k$ performs badly; this is because both {\em FairRec} and poorest-$k$ explicitly try to ensure larger exposure for producers while top-$k$ considers only the preferences of the customers.
		Similar to mixedTR-$k$ and mixedTP-$k$, MPB19 also fails to ensure MMS exposure as it does not explicitly ensure an exposure guarantee.
		On the other hand, MSR18 has exposure-based constraint for producers similar to the exposure guarantee in FairRec.
		However, MSR18 fails to exactly satisfy exposure constraint as it follows an approximate approach of constraint optimization while we find that the producer exposures in MSR18 results are quite close to the MMS exposure even though not crossing it.
		Thus, even though MSR18 performs quite bad in $H$, we find it to be performing almost as good as FairRec in the other two producer-side metrics.
		~\\{\bf Exposure Inequality (\hyperref[metric_Z]{$Z$}):} 
		Figures \ref{fig:gl_1_Z}, \ref{fig:gl_2_Z}, and \ref{fig:lf_Z} show that poorest-$k$ has the lowest inequality in exposure while {\em FairRec} and random-$k$ perform similar or slightly less than that;
		on the other hand top-$k$ performs the worst as it is highly customer-centric.
		Since poorest-$k$ strategy makes the deterministic selection of less exposed producers thereby making smarter choices than random-$k$, poorest-$k$ ensures less inequality (i.e., high $Z$) than random-$k$.
		Performances of both mixedTR-$k$ and mixedTP-$k$ lie in between the best and the worst lines as they mix customer-centric top half and producer-centric bottom half recommendations.
		MPB19 shows inconsistent performances across the datasets.
		On the other hand, MSR18 shows low exposure inequality (high $Z$) very close to that of FairRec and poorest-$k$, since MSR18 is observed to be ensuring exposures close to MMS for most of the producers.
		In summary, FairRec, poorest-$k$, and MSR18 seem to be good at maintaining fairness on the producer-side by keeping exposure inequality low.
		~\\{\bf Exposure Loss (\hyperref[metric_L]{$L$}):}
		Figures \ref{fig:gl_1_L}, \ref{fig:gl_2_L}, and \ref{fig:lf_L} show that random-$k$ and poorest-$k$ cause the highest amounts of exposure loss in comparison to top-$k$;
		this is because both of them favor equality in producer exposure (random-$k$ gives equal chance to all producers to be recommended while poorest-$k$ tries to increase the exposure of least exposed producer).
		On the other hand, mixedTR-$k$, mixedTP-$k$ cause smaller losses as only up to half of their recommendations are different from top-$k$.
		{\em FairRec} causes only up to $0.2$ fraction or $20$\% loss in exposure in comparison to top-$k$ owing to its intelligent selection approach.
		
		It is worth noticing that MMS for {\em LF} is low (MMS$=0$ for $k<10$, MMS$=1$ for $k\in[10,18]$, MMS$=2$ for $k\in[20,29]$,...).
		MMS is satisfied for all producers until $k=9$;
		but at k=10, MMS is not guaranteed for all producers, and thus, we see a drop in performance at $k=10$ which happens again at $k=19$. 
		Such changes in MMS specific to {\em LF} make its plots different from other datasets.
		In summary, both {\em FairRec}, poorest-$k$ perform the best in producer fairness while they cause exposure loss for very popular producers to compensate for the exposure given to less popular producers.
		\subsubsection{\bf Customer-Side Results}\label{customer_side}
		All customer-side results are plotted in Figure-\ref{fig:customer_side}.
		~\\{\bf Mean Average Envy (\hyperref[metric_Y]{$Y$}):}
		Figures \ref{fig:gl_1_Y}, \ref{fig:gl_2_Y}, and \ref{fig:lf_Y} reveal that top-$k$ causes the lowest possible (i.e., $0$) mean average envy among the customers;
		this is because it gives the maximum possible utility of $1$ to every customer thereby leaving no chances of envy among customers.
		Even mixedTR-$k$, mixedTP-$k$, MPB19, and MSR18 also show similarly low envy.
		{\em FairRec} generates very low values of envy which are very comparable to those of top-$k$ here.
		On the other hand both random-$k$ and poorest-$k$ cause the highest envy as they do not consider customer preferences at all during recommendation.
		~\\{\bf Mean and Standard Deviation of Customer Utility (\hyperref[sec:cust_metric]{$\mu_\phi$},\hyperref[sec:cust_metric]{$std_\phi$}):}
		From Figures \ref{fig:gl_1_mu}, \ref{fig:gl_2_mu}, and \ref{fig:lf_mu}, we see that both random-$k$ and poorest-$k$ cause huge losses in customer utility (i.e., low customer utility) as they neglect customer preferences.
		The mixedTR-$k$ and mixedTP-$k$ perform moderately.
		On the other hand, {\em FairRec} causes very small utility loss and performs almost at par with the customer-centric top-$k$.
		The utility losses in MPB19 and MSR18 are higher as they do not guarantee anything on the customer-side.
		This certifies that {\em FairRec} strikes a good balance between customer utility and producer fairness.  
		The standard deviation plots: figures-\ref{fig:gl_1_sigma}, \ref{fig:gl_2_sigma}, and \ref{fig:lf_sigma} reveal that for larger sizes of recommendation, random-$k$ and poorest-$k$ show large disparities in customer utilities while {\em FairRec}, mixedTR-$k$ and mixedTP-$k$ show relatively fewer disparities.
		As top-$k$ is customer-centric and provides the maximum utility of $1$ to all the customers, it shows $0$ standard deviation.
		Besides MPB19 and MSR18 show higher disparities on customer-side as they do not specifically guarantee anything on customer-side.
		
		~\\In summary, FairRec strikes a good balance between fairness on both producer-side and customer-side while causing only marginal losses in customer utility.
\subsection{Experiments with $\alpha$-MMS Guarantee}\label{alpha_mms_experiments}
Here we fix $k=20$, and test {\em FairRec} with different values of minimum exposure guarantee i.e., $\overline{E}=\left\lfloor\alpha\cdot \frac{mk}{n}\right\rfloor$ (where $0\leq\!\alpha\!\leq1$) by varying $\alpha$ in between $0$ and $1$ (or in other words varying $\overline{E}$ in between $0$ and MMS);
we plot the results in Figures \ref{fig:alpha_prod_performances} and \ref{fig:alpha_cust_performances}.

\subsubsection{\bf Producer-Side Results}
All the relevant producer-side results are plotted in Figure \ref{fig:alpha_prod_performances}.
~\\{\bf Producer Satisfaction (\hyperref[metric_H]{$H$}):}
Figures \ref{fig:gl_1_alpha_H}, \ref{fig:gl_2_alpha_H}, and \ref{fig:lf_alpha_H} reveal that {\em FairRec} satisfies almost all the producers (as \hyperref[metric_H]{$H$} is close to $1$) for all the tested $\alpha$ settings in all the datasets.
However the performances of the baselines in terms of metric \hyperref[metric_H]{$H$} reduces with the increase in $\alpha$;
this is because the criteria for the producer satisfaction (as $\overline{E}\propto \alpha$) increases with the increase in $\alpha$ while the baseline results do not explicitly change with the change in $\alpha$.
~\\{\bf Exposure Inequality (\hyperref[metric_Z]{$Z$}):}
From Figures \ref{fig:gl_1_alpha_Z}, \ref{fig:gl_2_alpha_Z}, and \ref{fig:lf_alpha_Z}, we see that increasing minimum exposure guarantee (i.e., increasing $\alpha$) results in lower inequality in producer exposures (as high \hyperref[metric_Z]{$Z$} signifies lower inequality) for {\em FairRec} in all the cases.
On the other hand, as the baselines do not depend on $\alpha$, their performances remain the same;
thus the baseline performances are just horizontal straight lines.
At $\alpha=0$, as the exposure guarantee $\overline{E}$ by {\em FairRec} becomes $0$, the results given by {\em FairRec} are the same as that of top-$k$ in all the datasets.
While {\em FairRec}'s performance in terms of \hyperref[metric_Z]{$Z$} crosses that of mixedTR-$k$ at $\alpha=0.7$, $\alpha=0.7$, and $\alpha=0.5$ in GL-CUSTOM, GL-FACT, and LF respectively, it becomes close to that of producer-centric poorest-$k$ at $\alpha=1$ in all the datasets.
~\\{\bf Exposure Loss (\hyperref[metric_L]{$L$}):}
Increasing values of \hyperref[metric_L]{$L$} are observed for increased $\alpha$ in Figures-\ref{fig:gl_1_alpha_L}, \ref{fig:gl_2_alpha_L}, \ref{fig:lf_alpha_L}.
This suggests that increasing minimum exposure guarantee can cause higher exposure losses for previously popular producers.
Just like \hyperref[metric_Z]{$Z$}, here also the baseline performances are just horizontal straight lines as they do not depend on $\alpha$.
We see that at $\alpha=0$, {\em FairRec} performs same as top-$k$ with no losses.
With the increase in $\alpha$ settings, the losses increase and at $\alpha=1$, the losses in {\em FairRec} are very close to those of poorest-$k$ (as discussed earlier, poorest-$k$ is the best performing baselines for producer side).
\begin{figure*}[t!]
	\center{
		\subfloat[{GL-CUSTOM, Metric: \hyperref[metric_H]{$H$}}]{\pgfplotsset{width=0.3\textwidth,height=0.23\textwidth,compat=1.9}
			\input{figures/gl_1_alpha_H}\label{fig:gl_1_alpha_H}}
		\hfil
		\subfloat[{GL-FACT, Metric: \hyperref[metric_H]{$H$}}]{\pgfplotsset{width=0.3\textwidth,height=0.23\textwidth,compat=1.9}
			\input{figures/gl_2_alpha_H}\label{fig:gl_2_alpha_H}}
		\hfil
		\subfloat[{LF, Metric: \hyperref[metric_H]{$H$}}]{\pgfplotsset{width=0.3\textwidth,height=0.23\textwidth,compat=1.9}
			\input{figures/lf_alpha_H}\label{fig:lf_alpha_H}}
		\vfil
		\subfloat[{GL-CUSTOM, Metric: \hyperref[metric_Z]{$Z$}}]{\pgfplotsset{width=0.3\textwidth,height=0.23\textwidth,compat=1.9}
			\input{figures/gl_1_alpha_Z}\label{fig:gl_1_alpha_Z}}
		\hfil
		\subfloat[{GL-FACT, Metric: \hyperref[metric_Z]{$Z$}}]{\pgfplotsset{width=0.3\textwidth,height=0.23\textwidth,compat=1.9}
			\input{figures/gl_2_alpha_Z}\label{fig:gl_2_alpha_Z}}
		\hfil
		\subfloat[{LF, Metric: \hyperref[metric_Z]{$Z$}}]{\pgfplotsset{width=0.3\textwidth,height=0.23\textwidth,compat=1.9}
			\input{figures/lf_alpha_Z}\label{fig:lf_alpha_Z}}
		\vfil
		\subfloat[{GL-CUSTOM, Metric: \hyperref[metric_L]{$L$}}]{\pgfplotsset{width=0.3\textwidth,height=0.23\textwidth,compat=1.9}
			\input{figures/gl_1_alpha_L}\label{fig:gl_1_alpha_L}}
		\hfil
		\subfloat[{GL-FACT, Metric: \hyperref[metric_L]{$L$}}]{\pgfplotsset{width=0.3\textwidth,height=0.23\textwidth,compat=1.9}
			\input{figures/gl_2_alpha_L}\label{fig:gl_2_alpha_L}}
		\hfil
		\subfloat[{LF, Metric: \hyperref[metric_L]{$L$}}]{\pgfplotsset{width=0.3\textwidth,height=0.23\textwidth,compat=1.9}
			\input{figures/lf_alpha_L}\label{fig:lf_alpha_L}}
		\vfil
		\subfloat{\pgfplotsset{width=.7\textwidth,compat=1.9}
			\begin{tikzpicture}
			\begin{customlegend}[legend entries={{FairRec},{Top-$k$},{Random-$k$},{Poorest-$k$},{MixedTR-$k$},{MixedTP-$k$}},legend columns=6,legend style={/tikz/every even column/.append style={column sep=0.2cm}}]
			\addlegendimage{blue,mark=o,sharp plot}
			\addlegendimage{red,mark=pentagon,sharp plot}
			\addlegendimage{black,mark=star,sharp plot}
			\addlegendimage{brown,mark=square,sharp plot}
			\addlegendimage{cyan,mark=triangle,sharp plot}
			\addlegendimage{black,mark=diamond,sharp plot}
			\end{customlegend}
			\end{tikzpicture}}
		
	}\caption{Producer-side performances with $\overline{E}=\alpha$MMS guarantee for $k=20$. First row: fraction of satisfied producers (\hyperref[metric_H]{$H$}). Second row: inequality in producer exposures (\hyperref[metric_Z]{$Z$}). Third row: exposure loss on producers (\hyperref[metric_L]{$L$}).}\label{fig:alpha_prod_performances}
	\vspace{-2mm}
\end{figure*}
\subsubsection{\bf Customer-Side Results}
All the relevant customer-side results are plotted in Figure \ref{fig:alpha_cust_performances}. As the baselines do not depend on $\alpha$, the customer-side results of baselines are just horizontal straight lines.
We find almost no change in Mean Average Envy (\hyperref[metric_Y]{$Y$}) of {\em FairRec} with the change in $\alpha$ (refer Figures \ref{fig:gl_1_alpha_Y}, \ref{fig:gl_2_alpha_Y}, \ref{fig:lf_alpha_Y}).
On the other hand, with the increase in $\alpha$ (i.e., higher exposure guarantee for producers) there is a small decrease in customer utility (refer Figures-\ref{fig:gl_1_alpha_mu}, \ref{fig:gl_2_alpha_mu}, \ref{fig:lf_alpha_mu}), and a small increase in the standard deviation of customer utilities.
\begin{figure*}[t!]
	\center{
		\subfloat[{GL-CUSTOM, Metric: \hyperref[metric_Y]{$Y$}}]{\pgfplotsset{width=0.31\textwidth,height=0.23\textwidth,compat=1.9}
			\input{figures/gl_1_alpha_Y}\label{fig:gl_1_alpha_Y}}
		\hfil
		\subfloat[{GL-FACT, Metric: \hyperref[metric_Y]{$Y$}}]{\pgfplotsset{width=0.31\textwidth,height=0.23\textwidth,compat=1.9}
			\input{figures/gl_2_alpha_Y}\label{fig:gl_2_alpha_Y}}
		\hfil
		\subfloat[{LF, Metric: \hyperref[metric_Y]{$Y$}}]{\pgfplotsset{width=0.31\textwidth,height=0.23\textwidth,compat=1.9}
			\input{figures/lf_alpha_Y}\label{fig:lf_alpha_Y}}
		\vfil
		\subfloat[{GL-CUSTOM, Metric: \hyperref[sec:cust_metric]{$\mu_{\phi}$}}]{\pgfplotsset{width=0.3\textwidth,height=0.23\textwidth,compat=1.9}
			\input{figures/gl_1_alpha_mu}\label{fig:gl_1_alpha_mu}}
		\hfil
		\subfloat[{GL-FACT, Metric: \hyperref[sec:cust_metric]{$\mu_{\phi}$}}]{\pgfplotsset{width=0.3\textwidth,height=0.23\textwidth,compat=1.9}
			\input{figures/gl_2_alpha_mu}\label{fig:gl_2_alpha_mu}}
		\hfil
		\subfloat[{LF, Metric: \hyperref[sec:cust_metric]{$\mu_{\phi}$}}]{\pgfplotsset{width=0.3\textwidth,height=0.23\textwidth,compat=1.9}
			\input{figures/lf_alpha_mu}\label{fig:lf_alpha_mu}}
		\vfil		
		\subfloat[{GL-CUSTOM, Metric: \hyperref[sec:cust_metric]{$std_{\phi}$}}]{\pgfplotsset{width=0.29\textwidth,height=0.23\textwidth,compat=1.9}
			\input{figures/gl_1_alpha_sigma}\label{fig:gl_1_alpha_sigma}}
		\hfil
		\subfloat[{GL-FACT, Metric: \hyperref[sec:cust_metric]{$std_{\phi}$}}]{\pgfplotsset{width=0.29\textwidth,height=0.23\textwidth,compat=1.9}
			\input{figures/gl_2_alpha_sigma}\label{fig:gl_2_alpha_sigma}}
		\hfil
		\subfloat[{LF, Metric: \hyperref[sec:cust_metric]{$std_{\phi}$}}]{\pgfplotsset{width=0.29\textwidth,height=0.23\textwidth,compat=1.9}
			\input{figures/lf_alpha_sigma}\label{fig:lf_alpha_sigma}}
		\vfil
		\subfloat{\pgfplotsset{width=.7\textwidth,compat=1.9}
			\begin{tikzpicture}
			\begin{customlegend}[legend entries={{FairRec},{Top-$k$},{Random-$k$},{Poorest-$k$},{MixedTR-$k$},{MixedTP-$k$}},legend columns=6,legend style={/tikz/every even column/.append style={column sep=0.2cm}}]
			\addlegendimage{blue,mark=o,sharp plot}
			\addlegendimage{red,mark=pentagon,sharp plot}
			\addlegendimage{black,mark=star,sharp plot}
			\addlegendimage{brown,mark=square,sharp plot}
			\addlegendimage{cyan,mark=triangle,sharp plot}
			\addlegendimage{black,mark=diamond,sharp plot}
			\end{customlegend}
			\end{tikzpicture}}
	}\caption{Customer-side performances with $\overline{E}=\alpha$MMS guarantee for $k=20$. First row: mean average envy (\hyperref[metric_Y]{$Y$}). Second row: mean customer utility (\hyperref[sec:cust_metric]{$\mu_\phi$}). Third row: standard deviation of customer utilities (\hyperref[sec:cust_metric]{$std_\phi$}).}\label{fig:alpha_cust_performances}
	\vspace{-2mm}
\end{figure*}

\noindent In summary, although a larger exposure guarantee can help platforms achieve better producer fairness, it might hurt the overall customer satisfaction and also the satisfaction of highly popular producers of the platforms.
Thus, the platforms, which are interested in similar minimum exposure guarantees, should not ignore the above trade-offs.

\subsection{Interplay Between Phase-1 and Phase-2 of FairRec}\label{subsec:phase_interplay}
To understand how phase-1 and phase-2 of {\em FairRec} work towards better performance on producer and customer sides, we plot the metrics at the end of both phase-1 and phase-2 (i.e., the end of {\em FairRec}) in Figures \ref{fig:phase_interplay_producer_side} and \ref{fig:phase_interplay_customer_side}.
\begin{figure*}[t!]
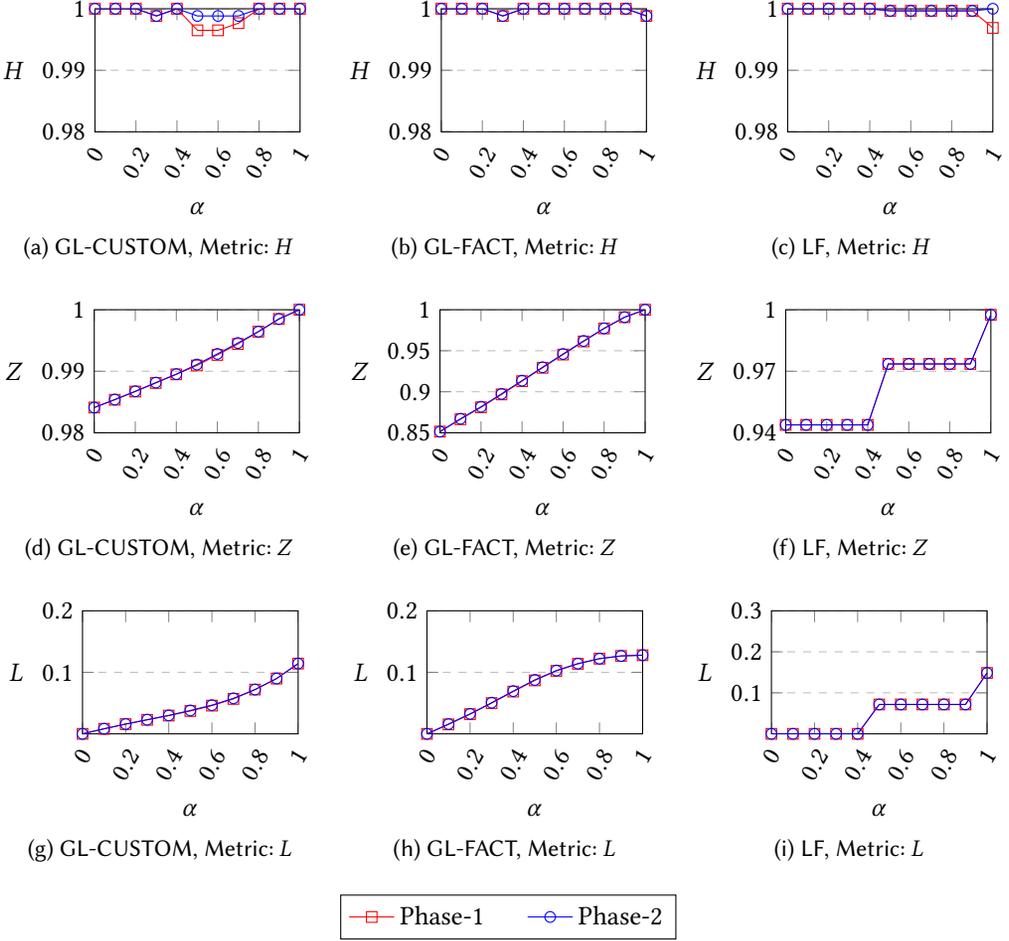

	\center{
		\subfloat[{GL-CUSTOM, Metric: \hyperref[metric_H]{$H$}}]{\pgfplotsset{width=0.31\textwidth,height=0.23\textwidth,compat=1.9}
			\input{figures/gl_1_H_phases}\label{fig:gl_1_H_phases}}
		\hfil
		\subfloat[{GL-FACT, Metric: \hyperref[metric_H]{$H$}}]{\pgfplotsset{width=0.31\textwidth,height=0.23\textwidth,compat=1.9}
			\input{figures/gl_2_H_phases}\label{fig:gl_2_H_phases}}
		\hfil
		\subfloat[{LF, Metric: \hyperref[metric_H]{$H$}}]{\pgfplotsset{width=0.31\textwidth,height=0.23\textwidth,compat=1.9}
			\input{figures/lf_H_phases}\label{fig:lf_H_phases}}
		\hfil
		\subfloat[{GL-CUSTOM, Metric: \hyperref[metric_Z]{$Z$}}]{\pgfplotsset{width=0.31\textwidth,height=0.23\textwidth,compat=1.9}
			\input{figures/gl_1_Z_phases}\label{fig:gl_1_Z_phases}}
		\hfil
		\subfloat[{GL-FACT, Metric: \hyperref[metric_Z]{$Z$}}]{\pgfplotsset{width=0.31\textwidth,height=0.23\textwidth,compat=1.9}
			\input{figures/gl_2_Z_phases}\label{fig:gl_2_Z_phases}}
		\hfil
		\subfloat[{LF, Metric: \hyperref[metric_Z]{$Z$}}]{\pgfplotsset{width=0.31\textwidth,height=0.23\textwidth,compat=1.9}
			\input{figures/lf_Z_phases}\label{fig:lf_Z_phases}}
		\hfil
		\subfloat[{GL-CUSTOM, Metric: \hyperref[metric_L]{$L$}}]{\pgfplotsset{width=0.32\textwidth,height=0.23\textwidth,compat=1.9}
			\input{figures/gl_1_L_phases}\label{fig:gl_1_L_phases}}
		\hfil
		\subfloat[{GL-FACT, Metric: \hyperref[metric_L]{$L$}}]{\pgfplotsset{width=0.32\textwidth,height=0.23\textwidth,compat=1.9}
			\input{figures/gl_2_L_phases}\label{fig:gl_2_L_phases}}
		\hfil
		\subfloat[{LF, Metric: \hyperref[metric_L]{$L$}}]{\pgfplotsset{width=0.32\textwidth,height=0.23\textwidth,compat=1.9}
			\input{figures/lf_L_phases}\label{fig:lf_L_phases}}
		\vfil
		\subfloat{\pgfplotsset{width=.5\textwidth,compat=1.9}
			\begin{tikzpicture}
			\begin{customlegend}[legend entries={{Phase-$1$},{Phase-$2$}},legend columns=5,legend style={/tikz/every even column/.append style={column sep=0.5cm}}]			
			\addlegendimage{red,mark=square,sharp plot}
			\addlegendimage{blue,mark=o,sharp plot}
			\end{customlegend}
			\end{tikzpicture}}
	}\caption{Phase interplay figures on producer-side for $k=20$. First row: fraction of satisfied producers (\hyperref[metric_H]{$H$}). Second row: inequality in producer exposures (\hyperref[metric_Z]{$Z$}). Third row: exposure loss on producers (\hyperref[metric_L]{$L$}).}\label{fig:phase_interplay_producer_side}
\end{figure*}
\subsubsection{\bf Producer-Side Results}
All the relevant producer-side results are plotted in Figure \ref{fig:phase_interplay_producer_side}.
We find that the metrics evaluated after phase-1 and phase-2 of {\em FairRec} have almost the same values in all the datasets;
this is because the task of reducing exposure inequality through minimum exposure guarantee to producers happens in phase-1 of {\em FairRec}, and phase-2 of {\em FairRec} does not explicitly work towards these goals.
Thus, there is very little change in the relevant metrics of producer satisfaction (\hyperref[metric_H]{$H$} in Figures \ref{fig:gl_1_H_phases}, \ref{fig:gl_2_H_phases}, \ref{fig:lf_H_phases}), exposure inequality (\hyperref[metric_Z]{$Z$} in Figures \ref{fig:gl_1_Z_phases}, \ref{fig:gl_2_Z_phases}, \ref{fig:lf_Z_phases}), and exposure loss (\hyperref[metric_L]{$L$} in Figures \ref{fig:gl_1_L_phases}, \ref{fig:gl_2_L_phases}, \ref{fig:lf_Z_phases}) after phase-2 than those at the end of phase-1.
However, at a few settings of $\alpha$ in GL-CUSTOM and LF where phase-1 falls a bit short in producer satisfaction (\hyperref[metric_H]{$H$}), phase-2 seems to improve it by a small extent (refer Figures \ref{fig:gl_1_H_phases} and \ref{fig:lf_H_phases} respectively).
\begin{figure*}[t!]
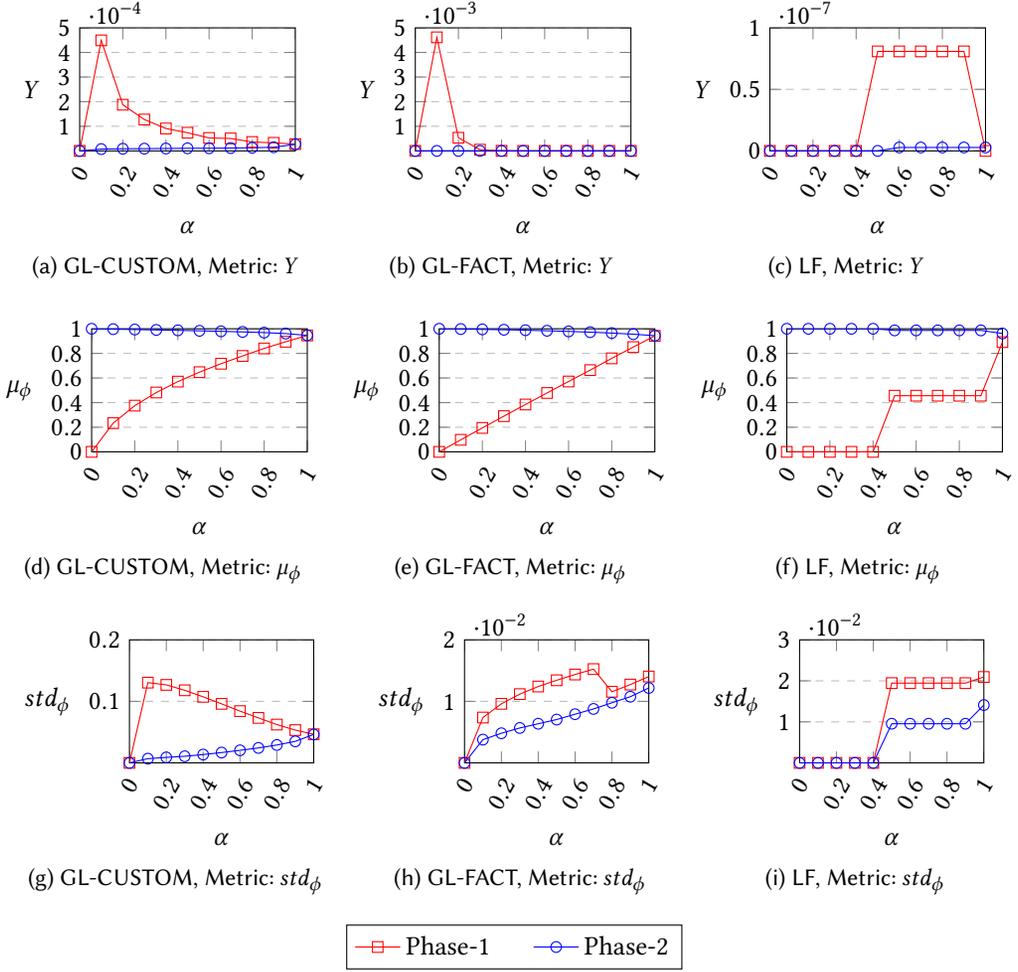

	\center{
		\subfloat[{GL-CUSTOM, Metric: \hyperref[metric_Y]{$Y$}}]{\pgfplotsset{width=0.32\textwidth,height=0.23\textwidth,compat=1.9}
			\input{figures/gl_1_Y_phases}\label{fig:gl_1_Y_phases}}
		\hfil
		\subfloat[{GL-FACT, Metric: \hyperref[metric_Y]{$Y$}}]{\pgfplotsset{width=0.32\textwidth,height=0.23\textwidth,compat=1.9}
			\input{figures/gl_2_Y_phases}\label{fig:gl_2_Y_phases}}
		\hfil
		\subfloat[{LF, Metric: \hyperref[metric_Y]{$Y$}}]{\pgfplotsset{width=0.32\textwidth,height=0.23\textwidth,compat=1.9}
			\input{figures/lf_Y_phases}\label{fig:lf_Y_phases}}
		\hfil
		\subfloat[{GL-CUSTOM, Metric: \hyperref[sec:cust_metric]{${\mu_{\phi}}$}}]{\pgfplotsset{width=0.32\textwidth,height=0.23\textwidth,compat=1.9}
			\input{figures/gl_1_mu_phases}\label{fig:gl_1_mu_phases}}
		\hfil
		\subfloat[{GL-FACT, Metric: \hyperref[sec:cust_metric]{${\mu_{\phi}}$}}]{\pgfplotsset{width=0.32\textwidth,height=0.23\textwidth,compat=1.9}
			\input{figures/gl_2_mu_phases}\label{fig:gl_2_mu_phases}}
		\hfil
		\subfloat[{LF, Metric: \hyperref[sec:cust_metric]{${\mu_{\phi}}$}}]{\pgfplotsset{width=0.32\textwidth,height=0.23\textwidth,compat=1.9}
			\input{figures/lf_mu_phases}\label{fig:lf_mu_phases}}
		\hfil
		\subfloat[{GL-CUSTOM, Metric: \hyperref[sec:cust_metric]{${std_\phi}$}}]{\pgfplotsset{width=0.29\textwidth,height=0.23\textwidth,compat=1.9}
			\input{figures/gl_1_sigma_phases}\label{fig:gl_1_sigma_phases}}
		\hfil
		\subfloat[{GL-FACT, Metric: \hyperref[sec:cust_metric]{${std_\phi}$}}]{\pgfplotsset{width=0.29\textwidth,height=0.23\textwidth,compat=1.9}
			\input{figures/gl_2_sigma_phases}\label{fig:gl_2_sigma_phases}}
		\hfil
		\subfloat[{LF, Metric: \hyperref[sec:cust_metric]{${std_\phi}$}}]{\pgfplotsset{width=0.29\textwidth,height=0.23\textwidth,compat=1.9}
			\input{figures/lf_sigma_phases}\label{fig:lf_sigma_phases}}
		\vfil
		\subfloat{\pgfplotsset{width=.5\textwidth,compat=1.9}
			\begin{tikzpicture}
			\begin{customlegend}[legend entries={{Phase-$1$},{Phase-$2$}},legend columns=5,legend style={/tikz/every even column/.append style={column sep=0.5cm}}]
			\addlegendimage{red,mark=square,sharp plot}
			\addlegendimage{blue,mark=o,sharp plot}
			\end{customlegend}
			\end{tikzpicture}}
	}\caption{Phase interplay figures on customer-side for $k=20$. First row: mean average envy (\hyperref[metric_Y]{$Y$}). Second row: mean customer utility (\hyperref[sec:cust_metric]{$\mu_\phi$}). Third row: standard deviation of customer utilities (\hyperref[sec:cust_metric]{$std_\phi$}).}\label{fig:phase_interplay_customer_side}
\end{figure*}
\subsubsection{\bf Customer-Side Results}
All the relevant customer-side results are plotted in Figure \ref{fig:phase_interplay_customer_side}.
For all the $\alpha$ values which correspond to $\overline{E}=0$ (like $\alpha=0$ in GL-CUSTOM and GL-FACT, $\alpha\in \{0.1,0.2,0.3,0.4\}$ in LF), we see that all the metrics \hyperref[metric_Y]{$Y$}, \hyperref[sec:cust_metric]{$\mu_\phi$}, and \hyperref[sec:cust_metric]{$std_\phi$} correspond to $0$ after phase-1 of {\em FairRec} as no allocation happens 
in phase-1;
all the allocations happen in phase-2 thus making the final results same as that of top-$k$ (\hyperref[metric_Y]{$Y$}$=0$, \hyperref[sec:cust_metric]{$\mu_\phi$}$=1$, and \hyperref[sec:cust_metric]{$std_\phi$}$=0$ after phase-2) in all the cases. 
We observe variations in these metrics right from the point where $\alpha$ corresponds to non-zero $\overline{E}$ (i.e., in our experiments $\alpha=0$ in GL-CUSTOM and GL-FACT, $\alpha=0.5$ in LF);
we describe these observations next.
~\\{\bf Mean Average Envy (\hyperref[metric_Y]{$Y$}):}
For smaller non-zero values of $alpha$, we see higher envy after phase-1 (\hyperref[metric_Y]{$Y$} in Figures \ref{fig:gl_1_Y_phases}, \ref{fig:gl_2_Y_phases}, \ref{fig:lf_Y_phases});
however with the increase in $\alpha$, phase-1 envy decreases;
this must be because with the increase in $\alpha$, the number of rounds of allocation in phase-1 increases, and thereby increasing the chances of canceling out envy to some extent in the Greedy Round Robin process.
On the other hand, phase-2 improves the results by reducing \hyperref[metric_Y]{$Y$};
this is because it allows best possible allocations to every customer by setting higher availability of all the products.
However, with the increase in $\alpha$, the amount of improvement happening in phase-2 reduces as more and more allocations happen in phase-1 thereby reducing the number of rounds in phase-2.
~\\{\bf Mean and Standard Deviation of Customer Utility (\hyperref[sec:cust_metric]{$\mu_\phi$},\hyperref[sec:cust_metric]{$std_\phi$}):}
With the increase in $\alpha$, we observe increase in the mean customer utility of phase-1 allocations (refer Figures \ref{fig:gl_1_mu_phases}, \ref{fig:gl_2_mu_phases}, \ref{fig:lf_mu_phases});
this is because the number of allocations in phase-1 increases with the increase in $\alpha$.
Customer utility after phase-2 is higher than that of phase-1 (refer Figures \ref{fig:gl_1_mu_phases}, \ref{fig:gl_2_mu_phases}, \ref{fig:lf_mu_phases});
however, the utility improvement in phase-2 reduces with the increase in $\alpha$ as more number of allocations happen in phase-1 and the number of  rounds in phase-2 becomes less for higher $\alpha$.
We also observe that phase-2 reduces the disparity in customer utilities than what is observed at the end of phase-1 (refer Figures \ref{fig:gl_1_sigma_phases}, \ref{fig:gl_2_sigma_phases}, \ref{fig:lf_sigma_phases});
this is because phase-2 allows best possible allocations to every customer thereby compensating for customer-side inequalities and losses incurred in phase-1.

\noindent In summary, phase-1 of {\em FairRec} tries to achieve better performance from the producer-side, whereas the phase-2 mostly improves the performance from the customer-side.
In addition to that the customer-side improvement in phase-2 is often more when the $\alpha$ values are smaller as it increases the number of allocations happening in the second phase.

\begin{figure*}[t!]
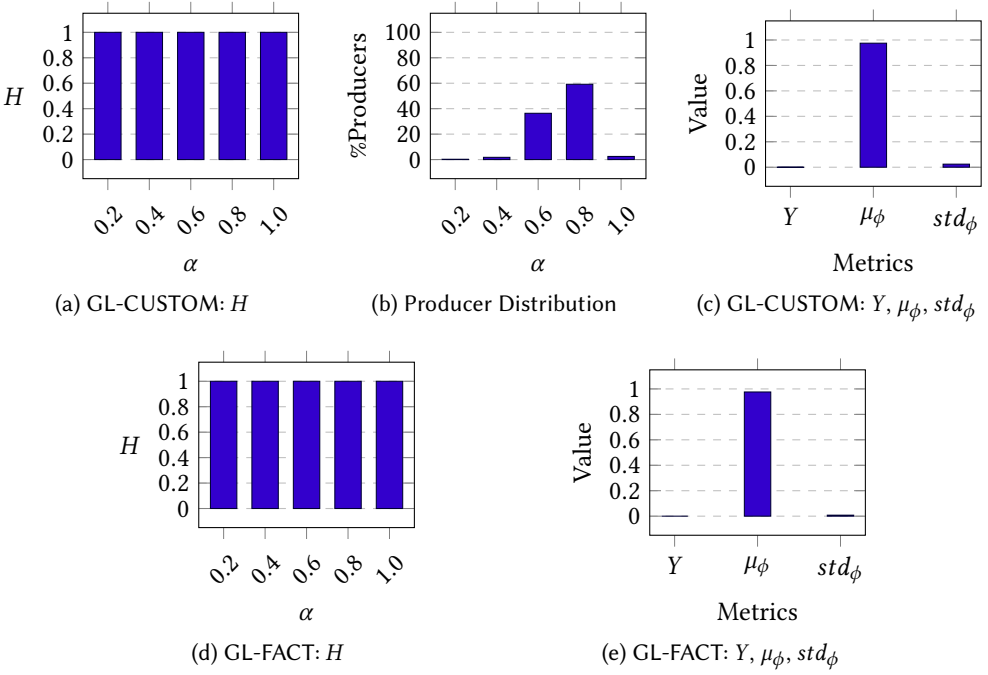

	\center{
		
		\subfloat[GL-CUSTOM: {\hyperref[metric_H]{$H$}}]{\pgfplotsset{width=0.32\textwidth,height=0.27\textwidth,compat=1.9}\input{figures/gl_1_differntial_H}\label{fig:gl_1_differential_H}}
		\hfil
		\subfloat[{Producer Distribution}]{\pgfplotsset{width=0.32\textwidth,height=0.27\textwidth,compat=1.9}
			\input{figures/gl_1_alpha_distribution}\label{fig:gl_1_differential_distribution}}
		\hfil
		\subfloat[GL-CUSTOM: $Y$, $\mu_{\phi}$, $std_{\phi}$]{\pgfplotsset{width=0.32\textwidth,height=0.27\textwidth,compat=1.9}\input{figures/gl_1_differntial_cust}\label{fig:gl_1_differential_cust}}
		\hfil
		\subfloat[GL-FACT: {\hyperref[metric_H]{$H$}}]{\pgfplotsset{width=0.32\textwidth,height=0.27\textwidth,compat=1.9}\input{figures/gl_2_differntial_H}\label{fig:gl_2_differential_H}}
		\hfil
		\subfloat[GL-FACT: $Y$, $\mu_{\phi}$, $std_{\phi}$]{\pgfplotsset{width=0.32\textwidth,height=0.27\textwidth,compat=1.9}\input{figures/gl_2_differntial_cust}\label{fig:gl_2_differential_cust}}
		
	}\caption{Results of FairRec with producer-specific $\alpha$ settings in GL-CUSTOM ($k=20$).}\label{fig:differential_alphas}
\end{figure*}
\subsection{Experiments with Producer-Specific $\alpha$}\label{subsec:differential_alpha}
The platforms may also need to ensure different levels of exposure guarantee for different producers;
for example the platform may want to give more exposure guarantee to high-rated producers than low-rated ones.
Thus, we also test {\em FairRec} in such a scenario.
Here, we fix $\alpha_p= 0.2\times\lfloor rating(p)\rfloor$; i.e., for $rating(p)=5$ the exposure guarantee is $\overline{E}_p=1.0\times MMS$, and similarly for $rating(p)=3.2$ the exposure guarantee is $\overline{E}_p=0.6\times MMS$.
We test {\em FairRec} with the above settings on GL-CUSTOM and GL-FACT, and plot the results in Figure \ref{fig:differential_alphas}.
Figure \ref{fig:gl_1_differential_distribution} shows the distribution of producers in Google Local dataset with different exposure guarantees (set based on their respective ratings).
Figures \ref{fig:gl_1_differential_H} and \ref{fig:gl_2_differential_H} plot $H$ for every group of producers with different exposure guarantees in GL-CUSTOM and GL-FACT respectively;
we find {\em FairRec} to be satisfying all the producers in both the cases.
Metrics $Z$ and $L$ are skipped here, as they are irrelevant in case of different exposure guarantees for different producers.
We plot customer-side results in Figures \ref{fig:gl_1_differential_cust} and \ref{fig:gl_2_differential_cust}, GL-CUSTOM and GL-FACT respectively;
we find that {\em FairRec} is able to achieve high customer utility ($\mu_\phi$ very close to $1$) while maintaining customer fairness (small $Y$ and $std_\phi$) in both cases.

\section{Improving FairRec Through Envy-cycle Elimination}\label{sec:modified_fairrec}
While FairRec provides two-sided fair recommendations, it can be further tweaked to improve the recommendation performance 
for the customers. We propose such a modification of FairRec in \cref{subsec:modified_algo} as FairRecPlus, and then evaluate it against FairRec in~\cref{subsec:modified_algo_results}.
However, the improvement in FairRecPlus comes at a cost of an increased computation time. Thus, in scenarios where platforms can afford more time to compute a recommendation, FairRecPlus can be used to improve customer-side performance while still maintaining the same fairness guarantees on both sides.
\subsection{FairRecPlus}
\label{subsec:modified_algo}

	{
	\begin{algorithm}[!h]
		{\raggedright
			{
				{\raggedright{\bf Input:} Set of customers $U=[m]$, set of distinct products $P=[n]$,   recommendation set size $k$ (such that $k<m$ and $n\leq k\cdot m$), and the relevance scores $V_u(p)$.}\\
				{\raggedright{\bf Output:} A two-sided fair recommendation.}
			}
			\caption{\textit{FairRecPlus} ($U, P, k, V$)}
			\label{alg:mod-two-sided}
			\begin{algorithmic}[1]
				
				\State Initialize allocation $\mathcal{A}^0=(A^0_1, \ldots, A^0_m)$ with $A^0_i~\leftarrow~\emptyset$ for each customer $i \in [m]$.
				
				\State
				\State {\textbf{First Phase:}}
				\State Fix an (arbitrary) ordering of the customers $\sigma = \left(\sigma(1), \sigma(2),\ldots, \sigma(m)\right)$.
				\State Initialize set of feasible products $F_u \leftarrow P$ for each $u\in U$.
				\State Set $\ell \leftarrow \left\lfloor \frac{\alpha mk}{n}\right\rfloor$ denoting number of copies of each product.	
				\State Initialize each component of the vector $S=(S_1, \ldots, S_n)$ with $S_j\leftarrow \ell$, $\forall j\in [n]$, this stores the number of available copies of each product.
				\State Set $T\leftarrow \ell\times n$, total number of items to be allocated.
				\State $[\mathcal{B}, F]\leftarrow$Modified-Greedy-Round-Robin$(m,n,S,T,V,\sigma,F)$.
				\State Assign $\mathcal{A}\leftarrow \mathcal{A}\cup \mathcal{B}$.
				\State
				\State {\textbf{Second Phase:}}
				\For{$i=1$ to $m$}	
					\If{ $|A_i|<k$}
						\State Set $H$ as the top ($k-|A_i|$) items from $F_{i}$ (based on $V_i(\cdot)$ scores).
						\State Update $A_{i} \leftarrow A_{i} \cup H$. 
					\EndIf
				\EndFor
						
%
				
				\State Return $\mathcal{A}$.
			\end{algorithmic}}
		\end{algorithm}
	}
	
		{
		\begin{algorithm}[!h]
			{\raggedright
				{
					{\raggedright {\bf Input :} Number of customers $m$, number of producers $n$, an array with number of available copies of each product $S$, total number of available products $T>0$, relevance scores $V_u(p)$ and feasible product set $F_u$ for each customer, and an ordering $\sigma$ of $[m]$.}\\
					{\raggedright {\bf Output:} An allocation of $T$ products among $m$ customers and the residual feasible set $F_u$.
					}
					\caption{Modified-Greedy-Round-Robin ($m,n,S,T,V,\sigma,F$)}
					\label{alg:mod-greedy1}
					\begin{algorithmic}[1]
						\State Initialize allocation $\mathcal{B}=(B_1, \ldots, B_m)$ with $B_i~\leftarrow~\emptyset$ for each customer $i \in [m]$.	
						\State Initiate round $r\leftarrow 0$.
						\While{ true }
							\State Set $r\leftarrow r+1.$
							\For{$i =1 \mbox{ to } m$}
							\State Set $p \in \underset{p'\in F_{\sigma(i)}:(S_p'\neq 0)}{\argmax} V_{\sigma(i)}(p')$ 
							\If{$p == \emptyset$}
								\State Update $\mathcal{B} = \{B_1,\ldots,B_m\}$ to obtain an acyclic envy graph $\mathcal{G}(\mathcal{B})$ using Lemma~\ref{lemma:envylipton}.
								\State Update ordering $\sigma$ to be the topological ordering of the envy graph $\mathcal{G}(\mathcal{B})$.
								\State \textbf{go to} Step $24$.
							\EndIf
							\State Update $B_{\sigma(i)} \leftarrow B_{\sigma(i)} \cup p$. 
							\State Update $F_{\sigma(i)} \leftarrow F_{\sigma(i)} \setminus p$.
							\State Update $S_p\leftarrow S_p-1$.
							\State Update $T\leftarrow T-1$.					
							\If{ $T==0$ }
								\State Update $\mathcal{B} = \{B_1,\ldots,B_m\}$ to obtain an acyclic envy graph $\mathcal{G}(\mathcal{B})$ using Lemma~\ref{lemma:envylipton}.
								\State Update ordering $\sigma$ to be the topological ordering of the envy graph $\mathcal{G}(\mathcal{B})$.
								\State \textbf{go to} Step $24$.
							\EndIf
							\EndFor
										
							\State Update $\mathcal{B} = \{B_1,\ldots,B_m\}$ to obtain an acyclic envy graph $\mathcal{G}(\mathcal{B})$ using Lemma~\ref{lemma:envylipton}.
						\EndWhile			
						\State Return $\mathcal{B}=(B_1,\ldots,B_m)$ and $F=(F_1,\ldots,F_m)$.
					\end{algorithmic}}}
				\end{algorithm}
			}
			
\begin{figure*}[t!]
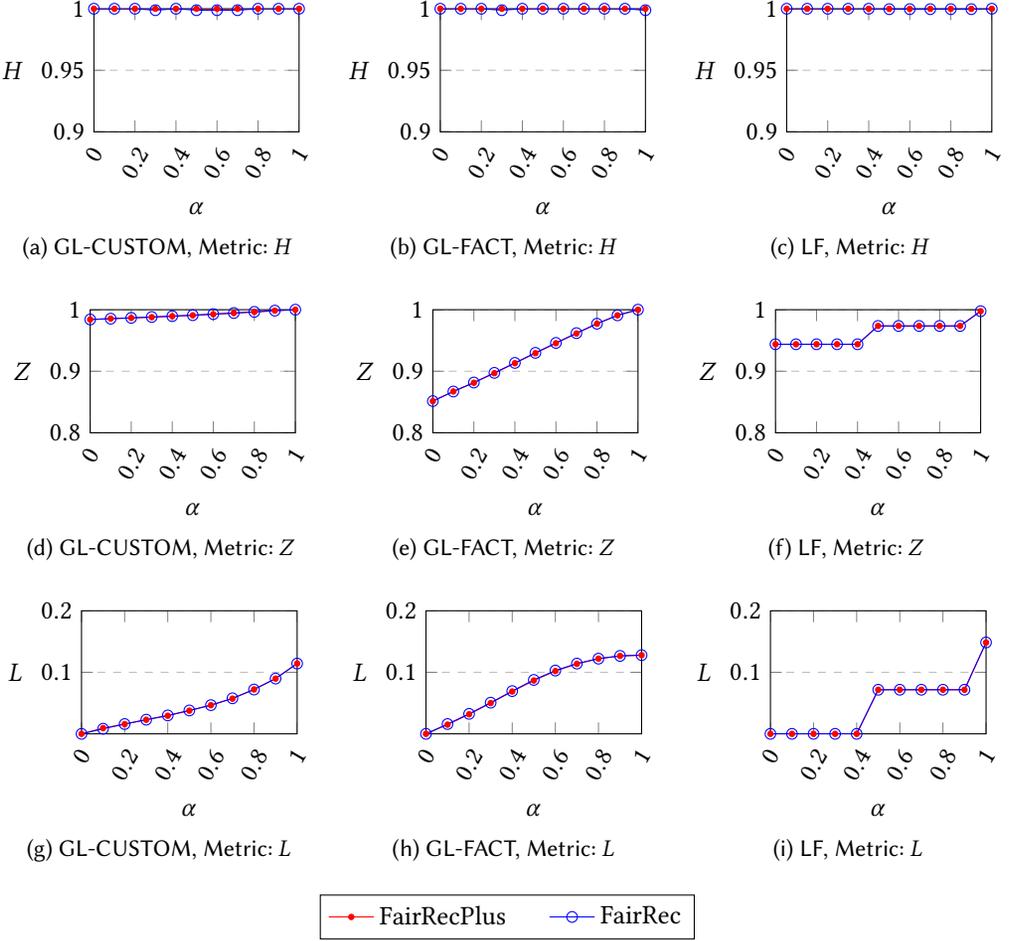

	\center{
		\subfloat[{GL-CUSTOM, Metric: \hyperref[metric_H]{$H$}}]{\pgfplotsset{width=0.31\textwidth,height=0.23\textwidth,compat=1.9}
			\input{figures/gl_1_H_mod}\label{fig:gl_1_H_mod}}
		\hfil
		\subfloat[{GL-FACT, Metric: \hyperref[metric_H]{$H$}}]{\pgfplotsset{width=0.31\textwidth,height=0.23\textwidth,compat=1.9}
			\input{figures/gl_2_H_mod}\label{fig:gl_2_H_mod}}
		\hfil
		\subfloat[{LF, Metric: \hyperref[metric_H]{$H$}}]{\pgfplotsset{width=0.31\textwidth,height=0.23\textwidth,compat=1.9}
			\input{figures/lf_H_mod}\label{fig:lf_H_mod}}
		\hfil
		\subfloat[{GL-CUSTOM, Metric: \hyperref[metric_Z]{$Z$}}]{\pgfplotsset{width=0.31\textwidth,height=0.23\textwidth,compat=1.9}
			\input{figures/gl_1_Z_mod}\label{fig:gl_1_Z_mod}}
		\hfil
		\subfloat[{GL-FACT, Metric: \hyperref[metric_Z]{$Z$}}]{\pgfplotsset{width=0.31\textwidth,height=0.23\textwidth,compat=1.9}
			\input{figures/gl_2_Z_mod}\label{fig:gl_2_Z_mod}}
		\hfil
		\subfloat[{LF, Metric: \hyperref[metric_Z]{$Z$}}]{\pgfplotsset{width=0.31\textwidth,height=0.23\textwidth,compat=1.9}
			\input{figures/lf_Z_mod}\label{fig:lf_Z_mod}}
		\hfil
		\subfloat[{GL-CUSTOM, Metric: \hyperref[metric_L]{$L$}}]{\pgfplotsset{width=0.32\textwidth,height=0.23\textwidth,compat=1.9}
			\input{figures/gl_1_L_mod}\label{fig:gl_1_L_mod}}
		\hfil
		\subfloat[{GL-FACT, Metric: \hyperref[metric_L]{$L$}}]{\pgfplotsset{width=0.32\textwidth,height=0.23\textwidth,compat=1.9}
			\input{figures/gl_2_L_mod}\label{fig:gl_2_L_mod}}
		\hfil
		\subfloat[{LF, Metric: \hyperref[metric_L]{$L$}}]{\pgfplotsset{width=0.32\textwidth,height=0.23\textwidth,compat=1.9}
			\input{figures/lf_L_mod}\label{fig:lf_L_mod}}
		\vfil
		\subfloat{\pgfplotsset{width=.5\textwidth,compat=1.9}
			\begin{tikzpicture}
			\begin{customlegend}[legend entries={{FairRecPlus},{FairRec}},legend columns=5,legend style={/tikz/every even column/.append style={column sep=0.5cm}}]			
			\addlegendimage{red,mark=*,mark size=1pt,sharp plot}
			\addlegendimage{blue,mark=o,sharp plot}
			\end{customlegend}
			\end{tikzpicture}}
	}\caption{FairRec modification results on producer-side for $k=20$. First row: fraction of satisfied producers (\hyperref[metric_H]{$H$}). Second row: inequality in producer exposures (\hyperref[metric_Z]{$Z$}). Third row: exposure loss on producers (\hyperref[metric_L]{$L$}).}\label{fig:modified_fairrec_prod}
\end{figure*}

\begin{figure*}[t!]
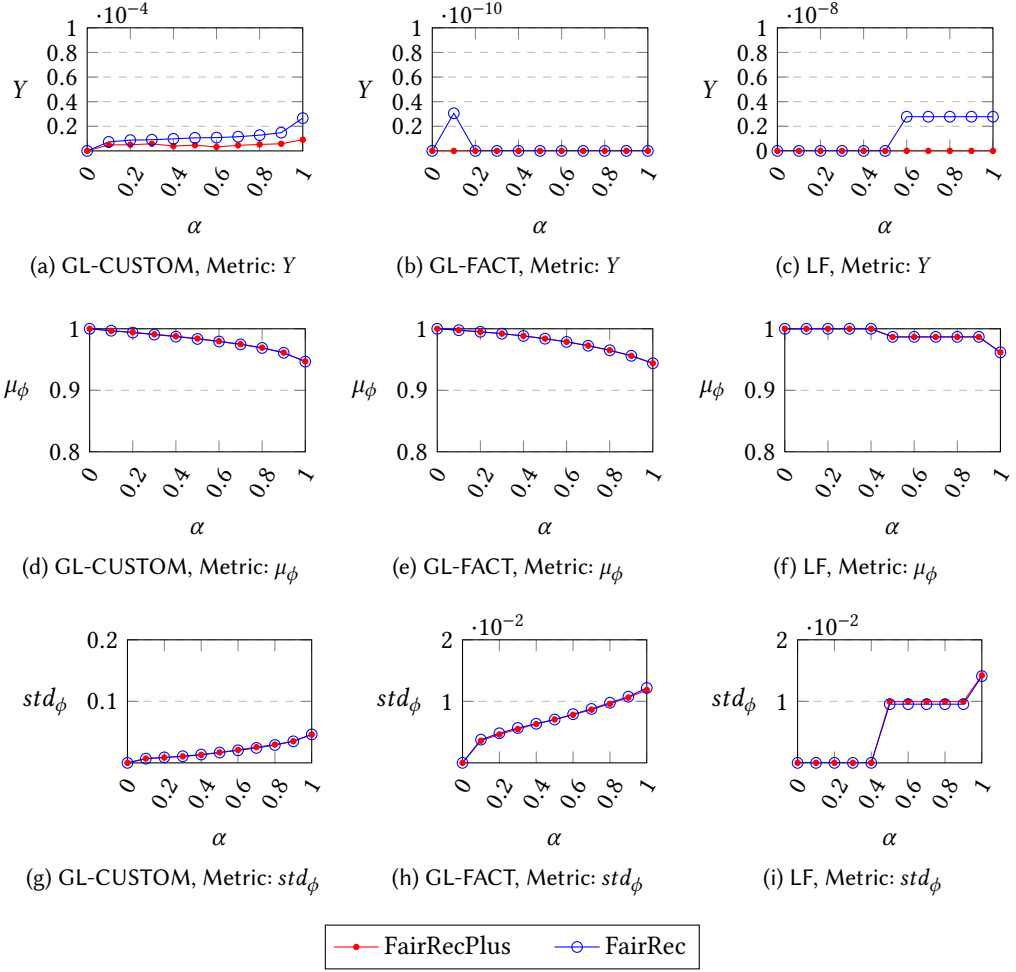

	\center{
		\subfloat[{GL-CUSTOM, Metric: \hyperref[metric_Y]{$Y$}}]{\pgfplotsset{width=0.32\textwidth,height=0.23\textwidth,compat=1.9}
			\input{figures/gl_1_Y_mod}\label{fig:gl_1_Y_mod}}
		\hfil
		\subfloat[{GL-FACT, Metric: \hyperref[metric_Y]{$Y$}}]{\pgfplotsset{width=0.32\textwidth,height=0.23\textwidth,compat=1.9}
			\input{figures/gl_2_Y_mod}\label{fig:gl_2_Y_mod}}
		\hfil
		\subfloat[{LF, Metric: \hyperref[metric_Y]{$Y$}}]{\pgfplotsset{width=0.32\textwidth,height=0.23\textwidth,compat=1.9}
			\input{figures/lf_Y_mod}\label{fig:lf_Y_mod}}
		\hfil
		\subfloat[{GL-CUSTOM, Metric: \hyperref[sec:cust_metric]{${\mu_{\phi}}$}}]{\pgfplotsset{width=0.32\textwidth,height=0.23\textwidth,compat=1.9}
			\input{figures/gl_1_mu_mod}\label{fig:gl_1_mu_mod}}
		\hfil
		\subfloat[{GL-FACT, Metric: \hyperref[sec:cust_metric]{${\mu_{\phi}}$}}]{\pgfplotsset{width=0.32\textwidth,height=0.23\textwidth,compat=1.9}
			\input{figures/gl_2_mu_mod}\label{fig:gl_2_mu_mod}}
		\hfil
		\subfloat[{LF, Metric: \hyperref[sec:cust_metric]{${\mu_{\phi}}$}}]{\pgfplotsset{width=0.32\textwidth,height=0.23\textwidth,compat=1.9}
			\input{figures/lf_mu_mod}\label{fig:lf_mu_mod}}
		\hfil
		\subfloat[{GL-CUSTOM, Metric: \hyperref[sec:cust_metric]{${std_\phi}$}}]{\pgfplotsset{width=0.29\textwidth,height=0.23\textwidth,compat=1.9}
			\input{figures/gl_1_sigma_mod}\label{fig:gl_1_sigma_mod}}
		\hfil
		\subfloat[{GL-FACT, Metric: \hyperref[sec:cust_metric]{${std_\phi}$}}]{\pgfplotsset{width=0.29\textwidth,height=0.23\textwidth,compat=1.9}
			\input{figures/gl_2_sigma_mod}\label{fig:gl_2_sigma_mod}}
		\hfil
		\subfloat[{LF, Metric: \hyperref[sec:cust_metric]{${std_\phi}$}}]{\pgfplotsset{width=0.29\textwidth,height=0.23\textwidth,compat=1.9}
			\input{figures/lf_sigma_mod}\label{fig:lf_sigma_mod}}
		\vfil
		\subfloat{\pgfplotsset{width=.5\textwidth,compat=1.9}
			\begin{tikzpicture}
			\begin{customlegend}[legend entries={{FairRecPlus},{FairRec}},legend columns=5,legend style={/tikz/every even column/.append style={column sep=0.5cm}}]
			\addlegendimage{red,mark=*,mark size=1pt,sharp plot}
			\addlegendimage{blue,mark=o,sharp plot}
			\end{customlegend}
			\end{tikzpicture}}
	}\caption{FairRec modification results on customer-side for $k=20$. First row: mean average envy (\hyperref[metric_Y]{$Y$}). Second row: mean customer utility (\hyperref[sec:cust_metric]{$\mu_\phi$}). Third row: standard deviation of customer utilities (\hyperref[sec:cust_metric]{$std_\phi$}).}\label{fig:modified_fairrec_cust}
\end{figure*}

\noindent
We now present a modification of FairRec: named {\it FairRecPlus}. FairRecPlus executes in two phases, similar to FairRec. The {\bf first phase} creates $\ell=\left\lfloor\frac{mk}{n}\right\rfloor$ copies of each product and then initializes each component of the vector $S$ of size $|P|$ to the value $\ell$. Then, assuming an arbitrary ordering $\sigma$ of customers, the modified greedy algorithm $\ALG~\ref{alg:mod-greedy1}$ is executed, which is  different from our earlier approach. In this modification, at the end of a round of greedy-round-robin allocation, we create an \textit{envy graph} and compute the topological ordering among the agents based on their partial allocation. The high-level idea is to re-order the priorities of the agents during round-robin allocations, aiming to balance the extent of envy between each pair of agents by maintaining an acyclic envy-graph after each round. An envy graph is a directed graph that captures the envy between agents---the nodes in the envy graph represent the agents and it contains a directed edge from $i$ to $j$ if and only if, $i$ envies $j$, i.e., if and only if $v_i(B_i) < v_i(B_j)$, where $B_i$ and $B_j$ are partial allocations to agents $i$ and $j$, respectively. It was established in \cite{lipton-envy-graph} that one can always efficiently update a given partial allocation such that the resulting envy graph is acyclic.

\begin{lemma}\label{lemma:envylipton}
	(\citet{lipton-envy-graph}) Given a partial allocation $(A_1, \ldots, A_m)$, we can find another partial allocation $\mathcal{B}$=$(B_1, \ldots, B_m)$ in polynomial time such that \\
	(i) The valuations of the agents for their bundles do not decrease: $v_i(B_i) \geq v_i(A_i)$ for all $i \in [m]$.\\
	(ii) The envy graph $G(\mathcal{B})$ is acyclic.
\end{lemma}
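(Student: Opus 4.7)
\textbf{Proof Proposal for Lemma~\ref{lemma:envylipton}.}
The plan is to adapt the classical \emph{envy-cycle elimination} procedure of Lipton et al. Starting from the given allocation $\mathcal{A}=(A_1,\ldots,A_m)$, I would construct the envy graph $G(\mathcal{A})$ and then repeatedly apply the following rotation step until the graph is acyclic: find a directed cycle $i_1\to i_2\to\cdots\to i_r\to i_1$, and rotate the bundles by assigning $B_{i_j} \leftarrow A_{i_{j+1}}$ for $j=1,\ldots,r-1$ and $B_{i_r}\leftarrow A_{i_1}$, while leaving the bundles of every agent outside the cycle unchanged. Checking for cycles and extracting one can be done in time $O(m^2)$ by a standard DFS, so each iteration is polynomial.

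First I would verify condition (i). By definition of the envy graph, each edge $i_j\to i_{j+1}$ in the identified cycle means $v_{i_j}(A_{i_j}) < v_{i_j}(A_{i_{j+1}})$. Hence after the rotation, the new bundle $B_{i_j}=A_{i_{j+1}}$ satisfies $v_{i_j}(B_{i_j}) > v_{i_j}(A_{i_j})$, so every agent in the cycle is strictly better off; all other agents keep their original bundles, so $v_i(B_i)\ge v_i(A_i)$ holds for every $i$. Note also that no new items are introduced, only a reassignment of the existing bundles occurs, so the allocation remains feasible (in particular, each product retains its current total exposure).

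Next I would argue termination. The natural potential is $\Phi(\mathcal{B})=\sum_{i=1}^{m} v_i(B_i)$, which strictly increases by a positive amount at every rotation. Since the multiset of bundles is merely permuted across rotations, $\Phi$ ranges over at most $m!$ distinct values, bounded above by $\sum_i v_i(P)$. For integer-valued valuations this yields an $O(\sum_i v_i(P))$ bound on the number of iterations, giving overall polynomial time; for rational valuations one clears denominators. When no cycle remains, $G(\mathcal{B})$ is acyclic by construction, establishing (ii).

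The main obstacle I expect is carefully bookkeeping the polynomial-time bound: a naive analysis only yields a bound in terms of $\Phi$, which is sensitive to the numerical representation of the valuations. The cleanest route around this is to observe that the set of reachable bundle-multisets is invariant under rotations (we only permute the existing $A_i$'s among the agents), so the potential $\Phi$ takes at most $m!$ distinct values, but to obtain a truly polynomial bound one additionally shows that the number of rotations is at most $O(m^3)$ by a vertex-based potential (e.g., the number of sink-reachable pairs $(i,j)$ with $v_i(B_j)\le v_i(B_i)$ monotonically increases). The remaining verification that (i) and (ii) persist across all iterations then follows directly from the per-step argument above.
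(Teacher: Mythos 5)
Your rotation step and the verification of property (i) are exactly the paper's argument: find a directed cycle in the envy graph, shift each cycle member to its successor's bundle, leave everyone else untouched, and note that cycle members strictly gain while the multiset of bundles (hence feasibility and each product's exposure) is preserved. The gap is in your termination/polynomial-time argument. Your primary potential $\Phi(\mathcal{B})=\sum_i v_i(B_i)$ only yields a pseudo-polynomial bound, and the observation that $\Phi$ takes at most $m!$ distinct values does not repair this, since $m!$ is superpolynomial in $m$; so the argument you actually carry out does not establish the "polynomial time" claim of the lemma.

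The correct polynomial bound — and the only genuinely nontrivial part of the paper's proof — is that each rotation strictly decreases the number of edges of the envy graph (equivalently, strictly increases the number of non-envy pairs $(i,j)$ with $v_i(B_j)\le v_i(B_i)$), which caps the number of rotations at $O(m^2)$. You mention such a pair-based potential only as a parenthetical ``e.g.'' and never prove its monotonicity, but this is precisely where the work lies: one must argue that (a) any new envy edge from an agent $k$ outside the cycle toward a cycle member $i_a$ corresponds one-to-one to the old edge $k\to i_{a+1}$, because $i_a$ now holds $A_{i_{a+1}}$; (b) edges between agents outside the cycle are unchanged; and (c) out-edges of cycle members can only disappear, since their own value strictly increased while the multiset of bundles held by others is merely permuted — and the former cycle edges themselves vanish, giving the strict decrease. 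Without this bookkeeping the claim that your pair count ``monotonically increases'' is unsupported (and your $O(m^3)$ figure is in any case looser than the $O(m^2)$ bound the edge argument gives). Supplying this step would make your proof coincide with the one in the paper.
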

\textit{Proof Sketch.}
If the envy graph of $\mathcal{A}$ is acyclic then the claim holds trivially. Otherwise, find a cycle in the graph $G(\mathcal{A})$ (time complexity $\mathcal{O}(n+|E|)$, where $E$ is the number of edges in the graph). Let $C=i_1 \rightarrow i_2 \rightarrow \ldots \rightarrow i_k \rightarrow i_1$ be the cycle. The bundles can be reallocated as follows: for all agents not in $C$, i.e., $k \notin \{i_1, i_2, \ldots, i_k \}$ set $B_k = A_k$, and for all the agents in the cycle set $B_i$ to be the bundle of their successor in $C$, i.e., set $B_{i_a} = A_{i_{(a+1)}}$ for $ 1\leq a < k$ along with $B_{i_k} = A_{i_1}$. After this reallocation $v_i(B_i) \geq v_i(A_i)$ for all $i \in [n]$. Furthermore, the number of edges in $G(\mathcal{B})$ is strictly less than $G(\mathcal{A})$: the directed edges $i_1 \rightarrow i_2 \rightarrow \ldots \rightarrow i_k \rightarrow i_1$ do not appear in the envy graph of $(B_1, \ldots, B_n)$ and if an agent $k$ starts envying an agent in the cycle, say agent $i_a$, then $k$ must have been envious of $i_{a+1}$ in $\mathcal{A}$. Edges between agents $k$ and $k'$ which are not in the cycle $C$ remain unchanged, and edges going out of an agent $i$ in the cycle $C$ can only get removed, since $i$'s valuation for the bundle assigned to her bundle increases. Therefore, we can repeatedly remove cycles and keep reducing the number of edges in the envy graph to eventually find a partial allocation $\mathcal{B}$ that satisfies the stated claim. $\hfill\square$\\

The worst-case time complexity of eliminating envy cycles, to obtain a directed acyclic graph (DAG) at each round, is $O(m^4)$. A topological ordering of the acyclic directed graph $\mathcal{G}(\mathcal{B})$ can be computed for updating the $\sigma$ in $O(m)$ time. This new ordering is then used for allocating the next round of items in a round-robin manner. 

The {\bf second phase} checks if all the customers have received exactly $k$ products. If yes, then no further allocation is required; if not, then allocate each agent $i$, their most valuable items from $F_i$ until they receive $k$ items. 

The time complexity of FairRecPlus is governed by the envy cycle elimination step which takes $O(m^4k)$ over $k$ rounds. Moreover, for each of the $mk$ items, finding the maximum valued feasible producer takes $O(n)$ time. Thus, the total time complexity of FairRecPlus is $O(m^4k + mnk)$. Therefore, for a large number of customers, FairRecPlus would take a huge time to compute a two-sided fair allocation. 
Although this modification requires more computation, we empirically observe that, it improves on the customer-side metrics.

\subsection{Results with FairRecPlus}
\label{subsec:modified_algo_results}
We test the FairRecPlus algorithm on all the datasets with $k=20$ while varying $\alpha$ from $0$ to $1$ in separate trials. The results on producer-side and customer-side are plotted in \cref{fig:modified_fairrec_prod,fig:modified_fairrec_cust} respectively.
While the producer-side plots for FairRec and FairRecPlus (\cref{fig:modified_fairrec_prod}) seem to completely overlap, they are marginally different from each other; for example, at $\alpha=0.5$ in GL-CUSTOM, the $H$, $Z$, and $L$ metrics for FairRecPlus are $1$, $0.9908$, and $0.0376$ respectively while those in FairRec are $0.99$, $0.9910$, and $0.0380$ respectively.
The producer-side performances of FairRecPlus is very similar to that of FairRec since the modification does not change any guarantee on the producer side.
On the other hand, the modification introduces only the envy-cycle removal rounds which reduces the envy on customer-side.
We observe that FairRecPlus reduces the mean customer envy (check $Y$ in \cref{fig:gl_1_Y_mod,fig:gl_2_Y_mod,fig:lf_Y_mod}). 
On the other hand, the performances of FairRecPlus in other customer-side metrics are similar to those of FairRec (\cref{fig:gl_1_mu_mod,fig:gl_2_mu_mod,fig:lf_mu_mod,fig:gl_1_sigma_mod,fig:gl_2_sigma_mod,fig:lf_sigma_mod});
for example at $\alpha=0.5$ in GL-CUSTOM, the mean and standard deviation of customer utilities are $0.9841$ and $0.0169$ in FairRecPlus, against the corresponding values $0.9834$ and $0.0167$ in FairRec.
\section{conclusion}\label{discussion}
In this work, we propose the notion of two-sided fairness for recommendations in two-sided platforms.
For producers, we consider a minimum exposure guarantee while we try to ensure less inequality in customer utilities.
Note that we assume the relevance of a product does not play any role in producer's utility (in contrast to \citet{singh2018fairness,biega2018equity}), and use only the exposure of a producer as her utility.
We provide a scalable and easily adaptable algorithm that exhibits desired two-sided fairness properties while causing a marginal loss in the overall quality of recommendations. We establish theoretical guarantees and provide empirical evidence through extensive evaluations of real-world datasets. Furthermore, we propose a modification of our algorithm and show that it performs better on customer-side metrics while being two-sided fair, but at the cost of additional computation time. 
Our work can be directly applied to fair recommendation problems in scenarios like mass recommendation/promotion sent through emails, app/web notifications. Though our work considers the offline recommendation scenario where the recommendations are computed for all the registered customers at once, it can also be extended for online recommendation settings by limiting the set of customers to only the active customers at any particular instant. However, developing a more robust realization of the proposed mechanism for a completely online scenario remains future work. Going ahead, we also want to study attention models that can handle position bias~\cite{agarwal2019estimating}, where customers pay more attention to the top-ranked products than the lower-ranked ones.\\

\noindent {\bf Acknowledgements:}
This work was conducted when A. Biswas was a PhD student at the Indian Institute of Science. She gratefully acknowledges the support of a Google PhD Fellowship Award.
G. K Patro acknowledges the support by TCS Research Fellowship.
This research was supported in part by an European Research Council (ERC) Advanced Grant for the project ``Foundations for Fair Social Computing" (grant agreement no. 789373), and an European Research Council (ERC) Marie Sklodowska-Curie grant for the project ``NoBIAS --- Artificial Intelligence without Bias" (grant agreement no. 860630), both funded under the EU's Horizon 2020.\\

\noindent {\bf Reproducibility:}
Code, dataset and other details are available at \url{https://github.com/gourabkumarpatro/FairRec}.


\bibliographystyle{ACM-Reference-Format}
\bibliography{main}

\end{document}